\newtheorem{theorem}{Theorem}
\newtheorem{lemma}{Lemma}
\newtheorem{definition}{Definition}
\newcommand{\eps}{\epsilon}
\newcommand*{\clNP}{$ \mathrm{\bf NP} $}
\newcommand*{\cH}{\mathcal{H}}
\newcommand{\maxlin }{MAX-E3LIN2 problem}
\newcommand{\bs }{\backslash} 
\newcommand{\cW }{\mathcal{W}}
\title{\bf Approximation Hardness of Graphic  TSP on Cubic Graphs \\[1ex]}
\author{
Marek Karpinski\thanks{Dept. of Computer Science and the Hausdorff
    Center for Mathematics, University of Bonn.
    Supported in part by DFG grants and the Hausdorff Grant EXC59-1/2.
    Email:~\texttt{marek@cs.uni-bonn.de}}
\and
    Richard Schmied\thanks{Dept. of Computer Science, University of Bonn.
    Work supported by Hausdorff Doctoral Fellowship.
    Email:~\texttt{schmied@cs.uni-bonn.de}}
}
\date{ 
}
\begin{document}
\maketitle

\begin{abstract}

\noindent We prove explicit approximation hardness results for the 
Graphic TSP on cubic and subcubic graphs as well as the new 
inapproximability bounds for the corresponding instances of the
(1,2)-TSP. The proof technique uses  new modular constructions
of simulating gadgets for the restricted cubic and subcubic instances.
The modular constructions used in the paper could be also of independent interest.
\end{abstract}

\section{Introduction} 

We study the Traveling Salesman Problem in the shortest path metric
completion (\emph{Graphic TSP}) of cubic as well as subcubic graphs. 
These two cases played a crucial role in some recent developments
on \emph{Graphic TSP} (cf. \cite{GLS05}, \cite{BSSS11a}, \cite{BSSS11b},
\cite{MS11}, \cite{M12}). We shed some  light on their
inapproximability status and 
prove explicit 
approximation hardness bounds of 1153/1152 for the cubic Graphic
TSP and 685/684 for the subcubic case.

We design new $3$-regular gadget amplifier construction yielding
the above bounds, and establish also new inapproximability bounds
for the cubic and subcubic instances of the (1,2)-TSP of 1141/1140
and 673/672, respectively.

The inapproximability bounds for the (1,2)-TSP improve over the bounds
of 1291/1290 and 787/786 claimed in \cite{CKK02} (see also \cite{EK06}). Our 
proof method in this paper depends on improved amplifier construction
and two transparent  and direct reduction  stages, firstly proving
approximation lower bounds for the cubic and subcubic instances of the
(1,2)-TSP, and then connecting it, in a special way, to the cubic and
subcubic instances of the Graphic TSP. 

We call an instance of (1,2)-TSP cubic and subcubic if the graph induced
by the all weight 1 edges is cubic and subcubic, respectively.

\section{Organization of the Paper}
The paper is organized as follows.
In Section~\ref{sec:prelim}, we give some basic definitions.
In Section~\ref{sec:approx}, we review some connections between the approximability 
of the Graphic TSP and the (1,2)-TSP. 
 In Section~\ref{sec:mainresults},
we formulate our main results, whereas in Section~\ref{sec:technique}, 
we describe the techniques
used in our proofs. In Section~\ref{sec:hybrid}, we introduce a
bounded occurrence Constraint Satisfaction problem called the Hybrid problem.   
In Section~\ref{sec:12tspmax5}, we describe the reduction given in \cite{KS13} 
from the Hybrid problem to the (1,2)-TSP. In Section~\ref{sec:12tspsubcub} and
 \ref{sec:12tspcub}, we introduce modular gadget constructions and
 prove explicit approximation hardness bounds 
for the (1,2)-TSP in subcubic and cubic graphs, respectively.
The inapproximability results   for the Graphic TSP in cubic and subcubic
graphs are given in Section~\ref{sec:graphtsp}.
In Section~\ref{sec:summary},  we summarize our results.

\section{Preliminaries}
\label{sec:prelim}
Given an arbitrary connected undirected graph $G=(V,E)$, 
we consider the shortest path metric completion $G'$ and
define the Graphic TSP problem for $G$ as the standard TSP
on the metric instance $G'$. Equivalently, the Graphic TSP
is the problem of finding a smallest Eulerian spanning 
multi-subgraph of $G$. We are interested here in special cases of the 
above problem for cubic (3-regular) and subcubic (maximum degree  3).
Both cases are known to be \clNP-hard in exact setting, as the Hamiltonian 
cycle problem is \clNP-hard for the $3$-regular graphs (cf. \cite{GJT76}), it can
be reduced to both (1,2)-TSP and Graphic TSP on cubic graphs. 

In order to describe a (1,2)-TSP instance, it is sufficient to specify
the edges of weight one. By constructing a graph $G=(V,E)$, the distance of
the vertices  $u$ and $v$ is defined to be 1 if $\{u, v\}\in E$
and 2 otherwise. To compute the cost of a tour, it is enough to consider
the parts of the tour traversing edges of $G$. 
 We call a vertex, in which the tour leaves or enters $G$ an \emph{endpoint}.
In addition,  
a vertex  with the property that the tour both enters and leaves $G$
 in that vertex  is called  \emph{double endpoint}, and we  count it as two endpoints.
If $n$ is the number of vertices  and $2\cdot p$ is the total number of endpoints, the 
cost of the tour is $n+p$ since every edge of weight two corresponds to two endpoints. 
On the other hand, every 
tour with cost $n + p$ has exactly $2\cdot p$ endpoints.

\section{Approximability}
\label{sec:approx}
The Graphic TSP for cubic and subcubic graphs is of special interest 
because of its connection to the famous $4/3$ conjecture on 
the integrality gap of the metric TSP (cf. \cite{BSSS11a}, \cite{BSSS11b}).  
Recently, the first polynomial time approximation algorithms with approximation
factor $4/3$ for the above problem on cubic and subcubic graphs were designed 
\cite{BSSS11b} and \cite{MS11}. This was slightly  improved beyond 4/3 bound for the case 
of $2$-connected cubic graphs \cite{CLS12}.

There was also a remarkable progress on general Graphic TSP 
(\cite{OSS11}, \cite{MS11}, \cite{M12})
leading to  the approximation factor $7/5$, cf. Seb\"o and Vygen~\cite{SV12}.

The (1,2)-TSP can be viewed as a special case of the Graphic TSP. To see this,
we simply augment the subgraph induced by all weight 1 edges in an instance of the 
(1,2)-TSP by a new vertex $z$ and add all edges connecting the original vertices
with that vertex $z$. Thus, the explicit approximation lower bound of 
$ 535/534 $ for general (1,2)-TSP is also the inapproximabiltity bound for
the general Graphic TSP. It is also known that the factor $3/2$ of Christofides'
algorithm~\cite{C76} for the general metric TSP is tight for the Graphic TSP
on cubic graphs. The best up to now approximation factor for (1,2)-TSP is 
$8/7$~\cite{BK06} (see also \cite{PY93}).  

In this paper, we attack both cubic and subcubic (1,2)-TSP and Graphic TSP,
and will use inherent connections between that problems.

\section{Main Results}
\label{sec:mainresults}
We prove the following explicit inapproximability results.

\begin{theorem}
\label{thm:12subcubic}
The 
Subcubic (1,2)-TSP is \clNP-hard to approximate to within any factor less
than $673/672$.
\end{theorem}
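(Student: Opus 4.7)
The plan is to chain three reductions: first, start from the hardness of the Hybrid problem introduced in Section~\ref{sec:hybrid}; second, apply the reduction from the Hybrid problem to the $(1,2)$-TSP from \cite{KS13} as described in Section~\ref{sec:12tspmax5}, which produces instances whose weight-1 graph has bounded degree but not yet maximum degree $3$; third, design and apply a modular gadget substitution (this is the work of Section~\ref{sec:12tspsubcub}) that locally replaces the high-degree substructures by subcubic ``simulators'' without loss in the inapproximability gap. Throughout, I would use the cost accounting formula $n+p$ introduced in Section~\ref{sec:prelim}, where $2p$ is the number of endpoints, so tracking weight-2 edges reduces to counting endpoints.

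The technical heart is the subcubic gadget. I would build a gadget $G_v$ for each vertex $v$ of the intermediate instance whose degree exceeds $3$, with designated ``port'' vertices of degree $\le 2$ inside $G_v$ to which the original incident weight-1 edges reattach. The gadget should admit only a small number of Hamiltonian-like traversal modes (one ``through'' mode and one ``branching'' mode per port subset) that correspond bijectively to the possible local behaviors of an optimal tour at $v$. The central lemma I would aim for is a normal-form statement: every tour can be modified, without increasing its cost and without changing the behavior at any other gadget, into one that traverses each $G_v$ in one of these canonical modes; and the minimum number of endpoints contributed by $G_v$ in each mode matches exactly the contribution of $v$ in the pre-substitution instance. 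With such a lemma in hand, the gap between YES and NO instances of Hybrid translates through the \cite{KS13} reduction and then through the gadget substitution into an additive gap of the form $\alpha\cdot\mathrm{OPT}_{\mathrm{Hybrid}}$ against a fixed background cost $\beta\cdot|V_H|+\gamma\cdot|E_H|$, from which the ratio $673/672$ is computed by substituting the explicit Hybrid gap and constants.

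The main obstacle I expect is the design of the subcubic gadget together with the tightness of the normal-form lemma. The denominator $672$ is determined by how many extra ``overhead'' vertices and endpoints the gadget contributes per unit of Hybrid gap; any slack, e.g., an alternative traversal mode of the gadget whose cost is suboptimal by only a constant, can force symmetric padding on both sides of the reduction and inflate the denominator. So the work is to make the gadget both combinatorially rigid (few traversal modes, each analyzable) and arithmetically tight (overhead calibrated to match the gap from the Hybrid hardness), while also being genuinely subcubic at every internal vertex. Once the gadget is fixed, the remaining verification is a routine endpoint count on each gadget type and a single arithmetic aggregation to obtain the claimed bound.
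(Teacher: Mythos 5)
Your high-level frame (Hybrid hardness $\to$ the \cite{KS13} construction $\to$ a degree-reducing modification, with a two-sided cost correspondence and the $n+p$ endpoint accounting) matches the paper, but the step that actually carries the theorem is missing, and the substitute you propose for it points in a different and problematic direction. You plan to keep the intermediate instance $G^{12}_\cH$ and locally replace each vertex of degree greater than $3$ by a ``port'' gadget $G_v$ whose traversal modes exactly reproduce the endpoint contribution of $v$. The paper does not do vertex-local surgery at all: the offending degree-$5$ vertices live inside the gadget $G^{3\oplus}_c$ for equations with three variables, and the paper discards that gadget entirely. It first performs a reduction at the constraint level, rewriting $x\oplus y\oplus z=0$ as three 2-in-3-SAT clauses $(x\vee a_1\vee a_2)$, $(y\vee a_2\vee a_3)$, $(z\vee a_1\vee a_3)$ with shared auxiliary variables, and then builds new subcubic gadgets $G^3_\vee$ (Figure~\ref{fig:3}) for the 2-in-3 predicate and $G_=$ (Figure~\ref{fig:4}) for the equalities $a^1_i\oplus a^2_i=0$, wired together as in Figure~\ref{fig:9}. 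The consistency of parity-gadget traversals is inherited from Lemma~\ref{lem:12gen:cons}, so no new normal-form lemma for a vertex gadget is needed.

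This matters for two reasons. First, a local degree-reduction gadget that exactly preserves the endpoint count of a degree-$5$ vertex in every traversal mode of a (1,2)-tour is not something you can take for granted; you yourself flag its design as the ``main obstacle,'' and without it the proof does not exist. The paper's route avoids having to solve that problem. Second, the constant $672$ is not a free parameter to be ``calibrated'': it is forced by the explicit sizes of the new gadgets, namely $8\cdot 60\nu$ for the two-variable equations plus $3\cdot(6+3\cdot 8+2)\cdot 2\nu=192\nu$ for the three-variable equations (Lemmas~\ref{lem:12sub:easy} and~\ref{lem:12sub:hard}). A per-vertex replacement scheme would yield whatever constant its own gadget sizes dictate, with no reason to land on $672$. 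So the proposal, as written, is a plan whose decisive construction is absent and whose proposed replacement strategy diverges from the one that makes the theorem true with the stated bound.
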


\begin{theorem}
\label{thm:12cubic}
The 
Cubic (1,2)-TSP is \clNP-hard to approximate to within any factor less
than $1141/1140$.
\end{theorem}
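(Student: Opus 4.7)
The plan is to reduce the subcubic (1,2)-TSP hardness of Theorem~\ref{thm:12subcubic} to the cubic case by completing the low-degree vertices with a small $3$-regular attachment, carefully controlling the additive cost so that the multiplicative gap becomes exactly $1141/1140$. In the subcubic reduction producing Theorem~\ref{thm:12subcubic}, one obtains a weight-$1$ graph $G$ with $n$ vertices whose maximum degree is $3$; let $V_1, V_2 \subseteq V(G)$ be the sets of vertices of weight-$1$ degree $1$ and $2$, respectively. Because the reduction is mechanical, the cardinalities $|V_1|, |V_2|$ depend linearly on the underlying Hybrid-problem instance, so that the completeness and soundness costs of $G$ can be written in closed form.

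Next I would attach to each vertex of $V_1$ and $V_2$ one of the modular gadgets from Section~\ref{sec:12tspcub}, chosen so that (i) the host vertex gains the missing incident weight-$1$ edges and ends up of degree exactly $3$, (ii) all internal vertices of the attached gadget are themselves of weight-$1$ degree $3$, and (iii) the contribution of the gadget to the cost of any optimum tour is a fixed additive constant, independent of whether the tour traverses the host as a pass-through, as a single endpoint, or as a double endpoint. The crux is (iii): the gadget must be rigid, in the sense that any tour deviating from the intended traversal pattern pays strictly more in weight-$2$ edges than it saves inside the gadget. This is exactly the property that makes the construction a \emph{gadget amplifier}, and it should follow from local case analysis of how a tour can enter and leave the constant-size gadget.

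Given such a gadget with vertex count $k$ and fixed additive cost $c$, the new cubic instance $G'$ has $\mathrm{OPT}(G')=\mathrm{OPT}(G)+(|V_1|+|V_2|)\cdot c$ and $|V(G')|=n+(|V_1|+|V_2|)\cdot k$. Plugging the known completeness value $\alpha$ and soundness value $\beta$ from Theorem~\ref{thm:12subcubic} into
\[
\frac{\beta + (|V_1|+|V_2|)\cdot c}{\alpha + (|V_1|+|V_2|)\cdot c},
\]
and choosing the smallest admissible gadget (hence the smallest $c$ and $k$), yields the ratio $1141/1140$. The arithmetic is dictated by the parameters of the Hybrid reduction and the KS13 reduction used in Section~\ref{sec:12tspmax5}; the subcubic ratio $673/672$ corresponds to $c=0$, so increasing $c$ to the minimum value compatible with full $3$-regularity strictly weakens the ratio, and the resulting value turns out to be $1141/1140$.

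The main obstacle is the rigidity analysis of the degree-completion gadgets. Unlike the subcubic case, where degree-$1$ and degree-$2$ vertices can be handled implicitly by allowing the tour to ``pivot'' on them, the cubic case requires each attached gadget to force exactly the intended number of endpoints on its host regardless of the global tour structure, including pathological situations in which the tour both enters and exits the gadget at the host (creating a double endpoint) or uses weight-$2$ short-cuts to partially bypass the gadget. I would therefore carry out a case analysis over all legal patterns of endpoints in the gadget, show that the intended pattern is strictly cheaper in each case, and only then combine the per-gadget additive contribution with the subcubic gap to obtain the claimed $\clNP$-hardness factor of $1141/1140$.
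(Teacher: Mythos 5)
Your high-level plan (lift the subcubic hardness to the cubic case by completing low-degree vertices, keep the overhead a clean additive term, recompute the ratio) matches the paper's strategy, but the proposal never supplies the one idea that makes it work: the concrete degree-completion construction. You propose to \emph{attach} a small $3$-regular-except-at-one-vertex gadget to each low-degree host vertex. Any such pendant gadget hangs off the rest of the graph by a bridge in the weight-$1$ graph, so a tour that covers its vertices must leave it through a weight-$2$ edge; this forces extra endpoints whose pairing with endpoints elsewhere depends on the global tour, which is exactly what destroys the ``fixed additive constant independent of the traversal pattern'' property you single out as the crux. The paper avoids this entirely by \emph{replacing} each degree-$2$ vertex $w$ (with neighbours $x,y$) by the path $v^1_w\mbox{-}v^2_w\mbox{-}v^3_w\mbox{-}v^4_w$ together with the chords $\{v^1_w,v^3_w\}$ and $\{v^2_w,v^4_w\}$ and the edges $\{x,v^1_w\}$, $\{y,v^4_w\}$: the gadget is traversed \emph{in line} (via $v^1\mbox{-}v^2\mbox{-}v^3\mbox{-}v^4$ or $v^1\mbox{-}v^3\mbox{-}v^2\mbox{-}v^4$) as part of whatever route already passed through $w$, so it contributes exactly $3$ extra unit-cost vertices and zero extra endpoints, and the soundness direction reduces to a local argument (Lemma~\ref{lem:12cubhard}) that any tour can be rerouted to use one of these two Hamiltonian traversals of the path before contracting it back to $w$. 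Also note that the subcubic instance has no weight-$1$ degree-$1$ vertices, so your set $V_1$ is empty; a genuine degree-$1$ vertex could not be fixed by a single pendant attachment anyway.

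The second gap is the arithmetic. You write the target ratio as $\bigl(\beta+(|V_1|+|V_2|)c\bigr)/\bigl(\alpha+(|V_1|+|V_2|)c\bigr)$ and assert it ``turns out to be $1141/1140$,'' but without counting the degree-$2$ vertices this is unverifiable, and the claim that the subcubic case ``corresponds to $c=0$'' does not by itself pin down the cubic value. The actual count is $2\cdot 60\nu$ degree-$2$ vertices from the two-variable-equation gadgets and $18\cdot 2\nu$ from the three-variable-equation gadgets (plus $O(n)$ from the wheel borders), each contributing $+3$, i.e.\ $360\nu+108\nu+O(n)=468\nu+O(n)$ added to both the completeness bound $672\nu+o(\nu)$ and the soundness bound $673\nu-o(\nu)$, which is what produces $1141/1140$. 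As written, your argument restates the required properties of the gadget and the required outcome of the computation rather than establishing either.
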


\noindent
For subcubic and cubic instances of the Graphic TSP, we prove the 
following.

\begin{theorem}
\label{thm:gr:cubic}
The 
Graphic TSP on subcubic graphs is \clNP-hard to approximate to within any factor less
than $685/684$.
\end{theorem}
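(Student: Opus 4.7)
The plan is to obtain Theorem~\ref{thm:gr:cubic} by a direct reduction from Subcubic (1,2)-TSP, feeding in the hardness bound $673/672$ established in Theorem~\ref{thm:12subcubic}. Given a Subcubic (1,2)-TSP instance specified by a subcubic graph $G=(V,E)$ with $|V|=n$, so that weight-$1$ edges are exactly the edges of $G$ and every other pair has weight $2$, I would construct a connected subcubic graph $H$ with $V \subseteq V(H)$ whose shortest-path metric completion is the Graphic TSP instance of interest.

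The central obstacle is that in a (1,2)-TSP tour the salesman may jump between any two vertices of $V$ at cost $2$, whereas in a subcubic graph $H$ one cannot connect every pair of original vertices by a short path without violating the degree bound. The naive universal-vertex trick recalled in Section~\ref{sec:approx} fails because that vertex would have degree $n$. I would instead attach a small, bounded-degree auxiliary ``connector'' structure, for instance a path or a shallow tree of maximum degree three, to the \emph{slack} vertices of $G$, i.e.\ those vertices whose degree in $G$ is strictly less than $3$. Such a connector simultaneously restores connectivity, keeps the whole graph subcubic, and offers controlled length-$2$ detours inside $H$ that can simulate the weight-$2$ jumps of the (1,2)-TSP.

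With $H$ in place I would prove a two-sided cost translation. In the forward direction, any (1,2)-TSP tour on $G$ of cost $n+p$ lifts to an Eulerian spanning multi-subgraph of $H$ of size $|V(H)|+\alpha p+\beta$, where each weight-$2$ jump is simulated by a traversal through the connector and each connector piece is visited at a fixed cost. In the reverse direction, any Eulerian spanning multi-subgraph of $H$ of size $|V(H)|+q$ is rerouted, via local swap arguments on the connector pieces, into a (1,2)-TSP tour of cost $n+\gamma q+\delta$. With tight constants $\alpha,\beta,\gamma,\delta$ this linear transfer of optima carries the hardness gap from the (1,2)-TSP side to the Graphic TSP side.

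The final step is a purely arithmetic calibration: the inflation of $|V(H)|$ relative to $n$ and the simulation overhead per endpoint must be balanced so that the ratio $673/672$ translates into exactly $685/684$. This calibration is where I expect the main difficulty to lie. A wasteful connector that forces redundant traversals would enlarge the base cost of the Graphic TSP optimum too much and shrink the gap, whereas a connector that is too thrifty with the degree budget may fail to faithfully simulate every weight-$2$ jump and break soundness. Hitting the explicit $685/684$ bound thus requires adapting the modular gadget framework used earlier in the paper so that both the per-vertex cost and the per-endpoint cost of the reduction are exactly what the arithmetic demands.
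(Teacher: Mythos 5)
There is a genuine gap here, and it sits exactly where you suspected the difficulty would be. Your plan is a black-box reduction from the Subcubic (1,2)-TSP instance of Theorem~\ref{thm:12subcubic}: take the weight-1 graph $G$, attach a bounded-degree ``connector'' to the slack vertices, and transfer the $673/672$ gap linearly. Two things break. First, the soundness of such a reduction requires that every weight-2 jump the (1,2)-tour might use is available in $H$ at cost exactly $2$ (or at least cost bounded by a fixed constant). In a subcubic graph two vertices are at distance $2$ only if they share a neighbour, and each connector vertex can serve at most three pairs; a tour in the hard instance can in principle jump between arbitrary pairs of the $n$ vertices, so no connector of size $o(n^2)$ can offer all these detours, and an adversarial Eulerian subgraph of $H$ need not route its ``expensive'' steps through your connector at all. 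Second, even if simulation worked, the target ratio $685/684$ is not a function of the ratio $673/672$ alone: it requires the additive blow-up $|V(H)|-|V(G)|$ to be exactly $12\nu$ where $\nu$ is the gap parameter of the underlying Hybrid instance, a quantity invisible to a reduction that only sees the graph $G$ and its optimum. So the calibration you flag as ``the main difficulty'' cannot be done from the outside.

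The paper resolves both problems by not treating the (1,2)-TSP hardness as a black box. It goes back to the Hybrid-problem construction $G^{12}_{SC}$ and modifies it locally: each equality gadget $G_=$ (of which there are $3$ per three-variable equation, hence $6\nu$ in total) is replaced by a graph $G^{gr}_=$ containing two additional vertices $c_1,c_2$, accounting for precisely the $12\nu$ extra vertices that move the base cost from $672\nu$ to $684\nu$. The point of the new gadget is that the only place where an unsatisfied equation forces extra cost is now a \emph{local} doubling of the single edge $\{c_1,c_2\}$, which in the shortest-path metric costs exactly $1$ more --- no long-range weight-2 jump is ever needed, which is what makes the construction survive in the graphic (rather than (1,2)) metric. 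The reverse direction is then easy in the direction you did not anticipate: a Graphic TSP tour is reinterpreted as a (1,2)-tour of no greater cost (graph distances dominate (1,2) distances), reduced to a tour of $G^{12}_{SC}$ of cost $672\nu+3(n+1)-1+\delta\nu$, and fed into Lemma~\ref{lem:12sub:hard}. If you want to repair your write-up, the fix is to replace the generic connector by this gadget-level surgery inside the Hybrid-problem reduction.
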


\begin{theorem}
\label{thm:gr:subcubic}
The 
Graphic TSP on cubic graphs is \clNP-hard to approximate to within any factor less
than $1153/1152$.
\end{theorem}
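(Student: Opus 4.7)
}
The plan is to reduce from the Cubic $(1,2)$-TSP hardness already established in Theorem~\ref{thm:12cubic}, mimicking the template used for Theorem~\ref{thm:gr:cubic} but strengthened so that the resulting Graphic TSP instance is $3$-regular rather than merely subcubic. Let $I$ be a Cubic $(1,2)$-TSP instance with weight-$1$ subgraph $G$ cubic on $n$ vertices; by Theorem~\ref{thm:12cubic}, after normalising the instance we may assume it is \clNP-hard to distinguish between optimum value $n+c$ and optimum value at least $n+s$, where $(n+s)/(n+c)=1141/1140$. The goal is to build from $I$ a cubic graph $H$ such that the Graphic TSP optimum on $H$ is an affine function of the $(1,2)$-TSP optimum on $I$, and then balance the parameters so that the final ratio becomes $1153/1152$.

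The construction attaches a collection of modular $3$-regular \emph{shortcut} gadgets to $G$. Each gadget provides a constant number of vertices of pairwise short graphical distance so that in the shortest-path metric completion of $H$ every weight-$2$ jump of a $(1,2)$-TSP tour can be emulated by a length-$2$ detour through the gadget, while the gadget itself, in combination with the $G$-side contact points, remains $3$-regular. Because every vertex of the cubic $G$ already has degree three, the gadgets cannot be pendant: the plan is to subdivide designated edges of $G$ and re-wire the subdivision vertices into the gadget, so that degrees are preserved at every interface. The gadget must additionally be rigid enough that an optimal Graphic TSP tour traverses each of them in an essentially unique way, contributing a fixed overhead $\gamma$ per gadget.

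The cost analysis then proceeds in two directions. For completeness, a $(1,2)$-TSP tour of cost $n+p$ lifts to a Graphic TSP tour of $H$ whose cost is $n+p+\gamma N$, where $N$ is the number of inserted gadgets. For soundness, every Graphic TSP tour on $H$ projects back to a $(1,2)$-TSP tour on $I$ whose number of weight-$2$ traversals is bounded by the excess of the Graphic TSP cost over $n+\gamma N$; the rigidity of the gadget forces any deviation from the canonical routing to pay at least as much as the saving it produces, so no soundness loss occurs. Combined, the reduction transforms the ratio $(n+s)/(n+c)$ into
\begin{equation*}
\frac{n+s+\gamma N}{n+c+\gamma N},
\end{equation*}
and the design parameters are set so that $\gamma N$ shifts the normalised denominator from $1140$ to $1152$, giving the claimed $1153/1152$; this matches the additive offset of $12$ observed in the subcubic case $672 \mapsto 684$, so the same calibration principle as in Theorem~\ref{thm:gr:cubic} applies.

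The main obstacle is the design of the modular gadget: it must (a) preserve $3$-regularity at contact points that already have degree three in $G$, ruling out the pendant attachments that sufficed in the subcubic case; (b) introduce enough common-neighbour structure that any two $G$-vertices are within graphical distance two after the reduction, so the $(1,2)$-metric is faithfully simulated; and (c) admit a tight tour analysis in which every shortcut the adversary might try through the gadget is charged at least its true $(1,2)$-cost. Balancing these three requirements, while keeping the overhead $\gamma N$ exactly equal to the arithmetic target dictated by $1153/1152$, is where the bulk of the combinatorial work will sit.
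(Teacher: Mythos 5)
Your plan diverges from the paper in a way that creates a real gap: you treat Theorem~\ref{thm:12cubic} as a black box and try to build a cubic graph $H$ in which \emph{every} weight-$2$ jump of a $(1,2)$-tour is emulated by a length-$2$ detour, i.e.\ your requirement (b) asks that any two $G$-vertices be within graphical distance two after the reduction. In a $3$-regular graph the ball of radius two around a vertex contains at most $10$ vertices, so this requirement is unsatisfiable for any instance of nontrivial size; it is essentially the ``add a universal vertex $z$'' trick from Section~\ref{sec:approx}, which is exactly what cubicity forbids. You flag this as the place where ``the bulk of the combinatorial work will sit,'' but no gadget with properties (a)--(c) as you state them can exist, so the reduction as planned cannot be completed.

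The paper avoids this by working white-box: it does not reduce from Cubic $(1,2)$-TSP at all, but returns to the Hybrid-problem construction $G^{12}_{SC}$/$G^{12}_{CU}$ and observes that the weight-$2$ shortcuts used by the intended tours occur only at a few known locations, namely the shortcut edge $\{s_=,e_=\}$ inside each equality gadget $G_=$. Each such gadget is replaced by $G^{gr}_=$, in which that shortcut becomes a genuine path through two new vertices $c_1,c_2$; when the simulated equation $a^i_1\oplus a^i_2=0$ is violated the tour traverses the edge $\{c_1,c_2\}$ twice, paying exactly $1$ extra, which is what a Graphic TSP tour (an Eulerian multi-subgraph) is allowed to do. This adds $3\cdot 2\cdot 2\nu=12\nu$ to the cost, shifting $1140\nu$ to $1152\nu$ (and $672\nu$ to $684\nu$ in the subcubic case) --- so your numerology for the offset of $12$ is right, but the mechanism producing it is local double-traversal of one edge per equality gadget, not global emulation of the $(1,2)$-metric. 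Cubicity is then obtained exactly as for Theorem~\ref{thm:12cubic}, by expanding degree-$2$ vertices into the $4$-vertex chorded paths, and soundness is proved by converting a Graphic tour into a consistent $(1,2)$-tour and contracting back to $G^{12}_{SC}$ (Lemmas~\ref{lem:grsub:easy} and~\ref{lem:grsub:hard}), not by a rigidity argument about adversarial shortcuts through attached gadgets.
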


\section{Techniques Used}
\label{sec:technique}
The method and techniques of the paper use new modular constructions
of simulating gadgets and are built upon some ideas of \cite{KS12}
and \cite{KS13}. The underlying constructions and their correctness
arguments
are presented in the subsequent sections.

\section{Hybrid Problem}
\label{sec:hybrid}
We start with defining the following  Hybrid problem (cf.~\cite{BK99}, see also  
and \cite{BK03}).

\begin{definition}[Hybrid problem] 
Given a system of linear
equations mod 2 containing n variables, $m_2$ equations with exactly two variables, and $m_3$
equations with exactly three variables, find an assignment to the variables that satisfies as
many equations as possible.
\end{definition}
\noindent
The following result is due to Berman and Karpinski~\cite{BK99}.
%
%
\begin{theorem}[\cite{BK99}]\label{ssphybridsatz}
For every constant $\epsilon \in (0,1/2)$ and $b\in \{0,1\}$, there exists instances of the 
Hybrid problem $I_\cH$
 with $42\nu$
variables, $60\nu$ equations with exactly two variables, and $2\nu$ equations 
of the form $x \oplus y \oplus z = b$ such that:
$(i)$ Each variable occurs exactly three times.
$(ii)$ It is \clNP-hard to decide whether
 there is an assignment to the variables that  
leaves at most $\epsilon\cdot \nu$ equations  unsatisfied,
or else every assignment  leaves at least $(1-\epsilon)\nu$ equations
unsatisfied.
$(iv)$ An  assignment to the variables in $I_\cH$ can be transformed 
in polynomial time into an  assignment satisfying all $60\nu$ equations
with two variables without decreasing the total number of  
 satisfied equations in $I_\cH$.
\end{theorem}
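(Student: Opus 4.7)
The plan is to derive this bounded-occurrence hardness from Håstad's tight inapproximability of MAX-E3-LIN-2 via the Berman--Karpinski amplifier construction. First I would start from Håstad's theorem, which for every $\delta\in(0,1/2)$ and every fixed right-hand side $b\in\{0,1\}$ makes it \clNP-hard to distinguish MAX-E3-LIN-2 instances (all equations of the form $x\oplus y\oplus z=b$) in which $(1-\delta)$-fractions of the equations are simultaneously satisfiable from those in which no assignment satisfies more than a $(1/2+\delta)$-fraction. By a standard random-shift / parallel-repetition normalization, one may take such an instance $I_0$ with exactly $2\nu$ three-variable equations (for any desired $\nu$) on some base set of $k\nu$ variables, each variable occurring a bounded constant number of times.

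Next I would replace the many occurrences of each base variable by a 3-regular \emph{amplifier} of Berman--Karpinski type. For each base variable $x_i$ create an expander-like 3-regular graph whose ``contact'' vertices (one per occurrence of $x_i$ in $I_0$) will feed into the corresponding 3-variable equation, and whose edges are interpreted as 2-variable equations $u\oplus v=0$. Contact vertices then have exactly one occurrence in a 3-variable equation of $I_0$ and two occurrences in amplifier equations, while interior vertices occur only in three 2-variable equations; property $(i)$ is built in. The amplifier sizes are chosen so that the final totals come out to $42\nu$ variables and $60\nu$ two-variable equations; since $2\nu\cdot 3 + 60\nu\cdot 2 = 42\nu\cdot 3$, the arithmetic is consistent. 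The expansion property of the amplifier (any assignment that disagrees with the majority on a constant fraction of the contact vertices violates a linear number of internal equations) allows one to ``decode'' any assignment of $I_{\cH}$ back to an assignment of $I_0$ while losing only an arbitrarily small fraction of satisfied equations. Plugging the yes/no gap of Håstad's instance through this loss gives $(ii)$ with the desired $\epsilon$.

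Property $(iv)$ is the most direct part. The subgraph defined by the $60\nu$ equations $u\oplus v=0$ decomposes into connected components (essentially the amplifiers themselves), and inside each component there are exactly two assignments satisfying all internal 2-variable equations: a ``canonical'' one and its bit-complement. Given any assignment to $I_\cH$, for each component independently pick whichever of these two consistent extensions satisfies more of the 3-variable equations touching that component; this operation can only increase the number of satisfied equations overall, and it yields an assignment that satisfies every single 2-variable equation.

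\paragraph{Main obstacle.} The heart of the argument, and the source of the explicit constants $42\nu$ and $60\nu$, is the 3-regular amplifier: one needs an explicit family of constant-size 3-regular graphs with strong enough vertex-expansion on the contact set so that the $(1-\delta)$ vs.\ $(1/2+\delta)$ gap from Håstad survives the reduction as $\epsilon$ vs.\ $1-\epsilon$, and simultaneously with the degree accounting above. Producing such amplifiers (and their probabilistic existence proof) is the technical bulk of \cite{BK99}; once they are in hand, the glueing, the counting, and property $(iv)$ are essentially bookkeeping.
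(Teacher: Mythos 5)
Your proposal follows essentially the same route as the paper's proof: start from H{\aa}stad's MAX-E3LIN2 hardness, replace each variable's occurrences by a $3$-regular Berman--Karpinski amplifier whose edges become the $u\oplus v=0$ equations, check the $42\nu$/$60\nu$ bookkeeping, and obtain $(iv)$ by moving to a consistent (all-equal) assignment on each amplifier. The only difference is that the paper instantiates the amplifier concretely as the \emph{wheel amplifier} ($14d_i$-cycle plus a random perfect matching on the non-multiples of $7$, contacts at multiples of $7$) while you leave it abstract, and note that the ``can only increase'' claim in your step $(iv)$ itself relies on the cut-expansion property $|E(U,V\setminus U)|\ge\min\{|U\cap X|,|(V\setminus U)\cap X|\}$ rather than being pure bookkeeping --- but both you and the paper defer exactly that property to \cite{BK99}.
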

\noindent
The instances of the Hybrid problem produced in Theorem~\ref{ssphybridsatz}
 have an even more special structure, which we are going to describe. 
For this, we are going to introduce the  \maxlin:
 Given a 
systems $I$  of linear equations mod $2$ with exactly $3$ variables in each equation,
find an assignment that maximizes the number of satisfied equations in $I$.\\
For the \maxlin,  H{\aa}stad~\cite{H01} gave an  optimal 
inapproximability result stated below. 
\begin{theorem}[H\aa stad~\cite{H01}]
\label{thm:max-e3-lin} 
For every  $\eps \in (0,1/2)$, there exists a constant $B_\eps$
and  instances of the \maxlin~with
 $2\cdot \nu$ equations  such that:\\ 
$(i)$
 Each
variable in the instance occurs at most $B_\eps$ number of times.\\ 
$(ii)$ It is \clNP-hard to distinguish whether
 there is an assignment satisfying
all but at most $\eps \cdot \nu$ equations, or every assignment leaves at least $(1-\eps)\nu$ equations
unsatisfied. 
\end{theorem}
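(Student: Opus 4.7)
The plan is to combine H\aa stad's optimal $3$-bit PCP for linear equations with a Berman--Karpinski style expander amplifier that enforces the bounded occurrence property. As a black box I would invoke H\aa stad's original theorem, which states that for every $\eps' \in (0,1/2)$ and every sufficiently large $M$, it is \clNP-hard to distinguish a system of $M$ linear equations over $\mathbb{F}_2$ with exactly three variables per equation in which some assignment leaves at most $\eps'\cdot M$ equations unsatisfied, from one in which every assignment leaves at least $(1/2-\eps')M$ equations unsatisfied. The proof of this statement proceeds via parallel repetition of the two-prover one-round Label Cover protocol, long-code composition at the inner verifier, and a Fourier-analytic soundness analysis; I would take the result as given without reproving it.

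The instance produced by H\aa stad's reduction does not a priori have any bound on variable occurrences, so to obtain property $(i)$ I would apply an expander-based amplifier in the spirit of Berman and Karpinski. For every original variable $x$ that occurs $k$ times, I introduce $k$ fresh copies $x_1,\dots,x_k$, redirect each occurrence of $x$ to a distinct copy, and then add, for every edge $\{i,j\}$ of a fixed constant-degree expander graph on $\{1,\dots,k\}$, a three-variable consistency equation of the form $x_i \oplus x_j \oplus w_{ij} = 0$ together with a few auxiliary three-variable equations pinning $w_{ij}$ to $0$. Every variable---both the copies and the auxiliaries---then occurs at most $B_\eps$ times, where $B_\eps$ depends only on the expander degree and on $\eps$.

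Correctness follows from two standard arguments. For completeness, any assignment to the original system extends to the amplified one by setting every copy to the value of its parent variable and every auxiliary to $0$, which leaves the number of unsatisfied equations unchanged. For soundness, given an arbitrary assignment to the amplified system, I would round each block of copies of $x$ to its majority value; the expansion of the consistency graph guarantees that the number of consistency equations destroyed by the rounding is at least a constant fraction of the number of minority copies, so the number of originally-satisfied equations that this rounding can destroy is at most a constant multiple of the number of consistency equations violated before rounding. After rescaling so that the total equation count is $2\nu$ and tuning $\eps'$ together with the expander degree, one obtains exactly the claimed gap between $(1-\eps)\nu$ and $\eps\nu$ unsatisfied equations. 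The main obstacle is the quantitative accounting: the amplifier inflates the equation count by a constant factor, and one has to verify that the completeness--soundness gap, expressed relative to the new count, still approaches H\aa stad's factor of $2$; this is where an explicit family of expanders with a known spectral gap (e.g.\ Ramanujan graphs) has to be plugged in so that $B_\eps$ is an effective constant depending only on $\eps$.
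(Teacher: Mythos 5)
The paper does not actually prove Theorem~\ref{thm:max-e3-lin}; it imports it verbatim from H{\aa}stad, and the bounded-occurrence property $(i)$ is a feature of H{\aa}stad's construction itself (the equations correspond to the random choices of the composed verifier, and by the symmetry of the long-code tables every proof position is queried the same constant number of times, that constant depending only on the parallel-repetition parameter and hence only on $\eps$). So no amplifier is needed, and the theorem is correctly cited as a black box.

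Your plan, by contrast, has a genuine quantitative flaw that cannot be repaired by tuning. If every variable occurring $k$ times is split into $k$ copies joined by a degree-$d$ expander, you introduce on the order of $\tfrac{3d}{2}M$ consistency equations on top of the original $M$ equations, and all of these consistency equations are simultaneously satisfiable by any consistent assignment. Hence in the soundness case the number of unsatisfied equations is still only about $(1/2-\eps')M$, but the denominator has grown to $(1+\tfrac{3d}{2})M$, so the unsatisfiable \emph{fraction} drops to roughly $\tfrac{1}{2+3d}$, far below the $\tfrac{1}{2}(1-\eps)$ that statement $(ii)$ requires (it demands $(1-\eps)\nu$ unsatisfied out of $2\nu$ total). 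For the gap to survive you would need the number of added equations to be $O(\eps M)$, which is impossible when every occurrence of every variable must be covered by a constant-degree consistency graph. This dilution is not a technicality: it is precisely why Theorem~\ref{ssphybridsatz} (the Berman--Karpinski Hybrid problem, which \emph{does} use a wheel-amplifier exactly as you propose) only achieves a gap of $\nu$ out of $62\nu$ equations rather than a factor-$2$ gap. Your approach proves something like Theorem~\ref{ssphybridsatz}, not Theorem~\ref{thm:max-e3-lin}. (A smaller issue: equations ``pinning $w_{ij}$ to $0$'' do not exist in E3LIN2 --- any single parity equation can be satisfied by flipping one of its variables --- so the auxiliaries would need their own consistency structure, compounding the dilution.)
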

\noindent
In the following,  we describe briefly the reduction given in~\cite{BK99} from the 
\maxlin~to the Hybrid problem and give the proof of Theorem~\ref{ssphybridsatz}. 
For this, let us first recall some definitions (see also [BK03]).

Let  $G$ be a  graph and $X \subset V (G)$. We say 
that $G$ is a    $d$-regular amplifier for $X$ if the following two conditions hold:
\begin{itemize}
\item[$(i)$]
 All vertices in  $X$ have degree $(d -1)$ and all vertices in 
$V (G)\bs X$ have degree $d$.
\item[$(ii)$]
For every non-empty subset $U \subset V (G)$, we have the condition that 
$$|E(U, V (G)\bs U)| \geq  \min\{ |U \cap X|, |(V (G)\bs U) \cap X| \},$$
 where $E(U, V (G)\bs U)= \{ e\in E(G) \mid |U\cap e| =1\}$. 
\end{itemize}
We call $X$  the set of contact vertices and  $V (G)\bs X$ the set of checker
vertices. Amplifier graphs are used for proving hardness of approximation
 for Constraint Satisfaction problems, in which every
variable occurs a bounded number of times. Berman and Karpinski~\cite{BK99} 
gave a probabilistic argument on the existence of 
$3$-regular amplifiers. In particular, they constructed 
a very special amplifier graph, which they called  wheel amplifier.

A \emph{wheel amplifier} $\cW$ with $2n$ contact vertices is constructed as follows. 
We first create a Hamiltonian
cycle on $14n$ vertices with edge set $C(\cW)$. Then, we number 
the vertices $1, 2,. . . , 14n$ and select uniformly at random
a perfect matching $M(\cW)$ of the vertices whose number is not a multiple of $7$. 
The 
vertices in the matching are our checker vertices and the remaining vertices
are  our contacts. 
The set $M(\cW) \cup C(\cW)$ defines the edge set of $\cW$.
It is not hard  to see that the
degree requirements are satisfied.
Berman and Karpinski~\cite{BK99} gave a probabilistic argument to prove that with high
probability the above construction indeed produces a $3$-regular amplifier graph.

\begin{theorem}[Berman and Karpinski~\cite{BK99}]
\label{thm:wheelamp}
With high probability, wheel amplifier are $3$-regular amplifier.
\end{theorem}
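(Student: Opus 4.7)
The plan is to verify the two defining properties of a $3$-regular amplifier separately. The degree condition is deterministic: contact vertices (those whose index is a multiple of $7$) are excluded from the matching and so have degree $2 = d - 1$ from the Hamiltonian cycle alone, while every checker vertex has exactly two cycle edges plus one matching edge, giving degree $3$. The entire probabilistic content therefore lies in the expansion inequality
\[
|E(U, V(\cW)\setminus U)| \ge \min\{|U \cap X|,\ |(V(\cW)\setminus U) \cap X|\},
\]
which I would establish by a union bound over all proper non-empty $U \subset V(\cW)$.

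Fix such a $U$ and decompose it into its $k$ maximal arcs along the Hamiltonian cycle; each arc contributes exactly two cycle edges to the cut, so $C(\cW)$ deterministically supplies $2k$ boundary edges. Writing $c$ for the number of checker vertices in $U$, the total cut size equals $2k + c - 2a$, where $a$ denotes the number of matching edges of $M(\cW)$ lying entirely inside $U$. Setting $t = \min\{|U \cap X|,\ |(V(\cW)\setminus U) \cap X|\}$, the expansion condition fails only when $a > (c - t + 2k)/2$, so we may restrict attention to the regime $2k < t$.

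The key probabilistic ingredient is a tail bound on $a$: for a uniformly random perfect matching on the $12n$ checker vertices, the number of internal edges within a fixed set of $c$ checkers is sharply concentrated around its mean $\sim c(c-1)/(2(12n-1))$, and a standard exponential-moment argument yields an estimate of the form $\mathbb{P}[a \ge m] \le (e\,c^{2}/(24nm))^{m}$. I would pair this with the counting bound that the number of subsets $U$ with exactly $k$ arcs on the cycle is at most $\binom{14n}{2k}$, since such a $U$ is determined by specifying its $2k$ boundary positions on the Hamiltonian cycle, and then sum over all relevant values of $k$, $c$, and $t$.

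The main technical obstacle is calibrating these two estimates so that the combinatorial factor $\binom{14n}{2k}$ is dominated by the tail probability whenever the shortfall $t - 2k$ is a positive linear function of $n$; this is precisely the constraint that dictates the specific cycle length $14n$ and the $1{:}6$ ratio of contacts to checkers used in the construction. The tightest regime, in which $k$ is a small constant fraction of $t$ while $t$ itself is only a small fraction of $n$, is the most delicate to handle, whereas the extremal regimes (very small $k$, or $k$ comparable to $t/2$) are straightforward. Summing the resulting tail estimates over all configurations shows that the probability of the expansion condition failing for some $U$ tends to $0$ as $n \to \infty$, so with high probability the wheel amplifier $\cW$ is indeed a $3$-regular amplifier.
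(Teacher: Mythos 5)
You should first be aware that the paper does not prove Theorem~\ref{thm:wheelamp} at all: it is imported wholesale from [BK99], with only the remark that Berman and Karpinski "gave a probabilistic argument." So there is no in-paper proof to compare your attempt against; the comparison has to be with the known argument of [BK99], and your outline does reproduce its correct skeleton. The degree verification is right and indeed deterministic; the identity $|E(U,V(\cW)\bs U)|=2k+c-2a$ for a set $U$ with $k$ cycle-arcs, $c$ checkers and $a$ internal matching edges is right; the reduction to the regime $2k<t$ is right; and a first-moment/union bound over arc decompositions, with a tail bound on the number of internal matching edges, is exactly the vehicle used in [BK99].

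The genuine gap is that the entire content of the theorem is the calibration you explicitly defer, and the two estimates you propose to calibrate do not suffice as stated. Concretely: the $7$-spacing of contacts gives $c\ge 6(p-k)$ for a $U$ with $p=|U\cap X|$ contacts, so in the regime $t=2k+1$ (where failure forces essentially all of $U$'s checkers to be matched internally) one has $c\ge 6k+6$, a failure probability of at most roughly $\bigl(c/(12n-c)\bigr)^{c/2}\le (k/n)^{3k}$, and a candidate count of $\binom{14n}{2k}\approx (7en/k)^{2k}$; the product is $\bigl((7e)^2\,k/n\bigr)^{k}$ with $(7e)^2\approx 362$, which is useless for all $k\ge n/362$, while $k$ legitimately ranges up to $t/2\le n$. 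So the union bound fails not only in the regime you call "most delicate" but also in the one you call "straightforward" ($2k$ close to $t$). Rescuing it is not a matter of bookkeeping: one needs the sharper matching estimate that also charges for the $c-2a$ checkers of $U$ forced to be matched into $V(\cW)\bs U$ (this gains roughly a factor $e^{-1}$ per internal edge and brings the constant down near $1$), a restriction to minimal violating sets so that every arc contains a contact and the count $\binom{14n}{2k}$ can be refined by $c$, $p$ and $t$, and then the explicit numerical verification that the constants $14n$ and $1{:}6$ make the triple sum over $k,c,t$ vanish. None of this appears in the proposal, so as written it establishes only the deterministic degree condition and the (correct) framing of the problem, not the theorem.
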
 
Let us proceed and give the proof of Theorem~\ref{ssphybridsatz}.
\begin{proof}[Proof of Theorem~\ref{ssphybridsatz}.]
Let  $\eps \in (0,1/2)$ be a constant  and $I$ an instance of the 
\maxlin, in which the number of occurences of each variable is bounded by $B_\eps$.
For a fixed $b\in \{0,1\}$, we can negate 
some of the  literals such that
 all equations in the instance $I$ are 
of the form $x\oplus y \oplus z= b$, where $x, y, z $ are variables 
or negated variables.

For a  variable $x_i$ in $I$, we denote by $d_i$ the number of occurences
of $x_i$ in $I_1$. For each $x_i$, 
we  create  a set of $7\cdot d_i=  \alpha$ new variables 
$Var(i)=\{ x^{i}_j  \}^\alpha_{j=1}$.
In addition, we construct a  wheel amplifier $\cW_i$ on $ \alpha$ vertices
with $d_i$ contacts.  Since $d_i$ is bounded by  a
constant, it can be accomplished in constant time.  In the remainder, we
refer to \emph{contact} and \emph{checker} variables as
$x^i_l\in Var(i)$, whose corresponding index $l$ is 
a contact and checker vertex in $\cW_i$,
respectively.  

Let us now define the equations of the new instance $I_\cH$
of the Hybrid problem.  
For each edge $\{j,k\} \in M(\cW_i)$, 
we create   $x^{i}_j \oplus 
x^{i}_k = 0$ and refer to equations of this form  as \emph{matching equations}. 
On the other hand, for each edge $\{l,t\}$  in   $C(\cW_i)$,
 we introduce  $x^{i}_l \oplus x^{i}_t = 0$.
Equations of the form $x_{i}\oplus x_{i+1}= 0$ with $i\in \{2,\ldots, \alpha -1 \}$
and $x_1 \oplus x_\alpha =0$
are  called \emph{cycle equations}, whereas  $x_{1}\oplus x_{2}= 0$
is the cycle border equation.
Finally, we replace the $j$-th  occurrence of $x_i$ 
in $I$ by the contact variable $x^{i}_\lambda$, where  $\lambda =7 \cdot j $.
Accordingly, we  have  $2\nu $ equations with three variables in $I_\cH$,
$60 \nu $ equations with two variables and
each variable appears in exactly $3$ equations.

We call an assignment to $Var(i)$ as \emph{consistent} if for $b_i\in \{0,1\}$,
we have that  $x^{i}_j=b_i$  for all $j\in [\alpha]$. 
A consistent assignment to the variables of $I_\cH$ 
is an assignment that is consistent for each  $Var(i)$.
 By using standard
arguments and  the amplifier constructed in Theorem~\ref{thm:wheelamp}, 
we are able to transform an assignment to the variables of $I_\cH$ 
into a  consistent one 
without decreasing the number of satisfied equations and the proof
of Theorem~\ref{ssphybridsatz} follows. 
\end{proof}

\section{(1,2)-TSP in  Graphs with Maximum Degree 5}
\label{sec:12tspmax5}
In this section, we describe the reduction constructed in  \cite{KS13} 
from the Hybrid problem to the (1,2)-TSP. 
In particular, this construction can be used to prove the following
 theorem.

\begin{theorem}[\cite{KS13}]
\label{thm:12gen:ks12}
The  (1,2)-TSP  is \clNP-hard to approximate to within any factor less
than $535/534$.
\end{theorem}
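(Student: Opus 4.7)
The plan is to reduce the Hybrid problem of Theorem~\ref{ssphybridsatz} to the (1,2)-TSP by a gadget-based polynomial-time reduction that encodes each Hybrid instance $I_\cH$ into a weight-1 graph $G$ whose cheapest traveling tour reflects the number of unsatisfied equations. By property $(iv)$ of Theorem~\ref{ssphybridsatz} I may restrict attention to assignments satisfying every two-variable equation, so the hardness gap arises entirely from the $2\nu$ three-variable equations; the two-variable equations simply enforce consistency between the $3$ occurrences of each variable across the corresponding wheel amplifier $\cW_i$.

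The construction I would use consists of two families of gadgets. First, for each of the $42\nu$ variables I build a \emph{variable gadget} with three distinguished \emph{port pairs}, one per occurrence, admitting exactly two cheap traversal patterns that are to be interpreted as the Boolean values True and False. Second, for each equation (two- or three-variable) I build an \emph{equation gadget} whose ports are identified with the appropriate port pairs of the variable gadgets. The equation gadget is designed so that every combination of incident variable patterns that satisfies the equation admits a traversal using only weight-1 edges of $G$, while every unsatisfying combination forces a fixed positive number of endpoints, say $k$ additional endpoint-pairs. The \emph{matching} and \emph{cycle} two-variable equations supplied by the wheel amplifier $\cW_i$ enforce consistency among the checker and contact variables of variable $x_i$, in the sense that any inconsistency on a subset $U$ of $\mathrm{Var}(i)$ costs at least $|E(U,V(\cW_i)\setminus U)|$ additional endpoints, which the amplifier property of Theorem~\ref{thm:wheelamp} converts into a penalty proportional to the number of disagreeing contacts.

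Having fixed the gadgets, the proof has two directions. In one direction, I associate a \emph{canonical tour} to any assignment $\sigma$: traverse each variable gadget according to $\sigma$, each satisfied equation gadget by its satisfying pattern, and each unsatisfied equation gadget by a cheapest alternative pattern. Writing $N$ for the total number of vertices of $G$ this yields a tour of cost $N+p_0+k\cdot u$, where $p_0$ is the baseline endpoint count and $u$ the number of equations of $I_\cH$ that $\sigma$ fails to satisfy. In the other direction, any tour can be rerouted into a canonical form without increasing its cost by means of local modifications that swap non-canonical traversals for canonical ones; together with the amplifier penalty this shows that the optimum tour cost equals $N+p_0+k\cdot \mathrm{opt}(I_\cH)$ up to the lower-order $\epsilon\nu$ contribution of Theorem~\ref{ssphybridsatz}. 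Choosing gadget sizes so that $N+p_0$ equals $534$ times $k\nu$ up to lower-order terms, and letting $\epsilon\to 0$, converts the Hybrid gap $[(\epsilon \nu,(1-\epsilon)\nu]$ into the ratio $535/534$.

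The main obstacle is the tight numerical bookkeeping. One must exhibit explicit variable and equation gadgets whose vertex counts, endpoint baselines, and per-failure penalties line up so that the ratio comes out exactly to $535/534$, rather than to some nearby but worse number; and one must exhaustively verify, by case analysis on the possible local tour patterns in and between gadgets, that no ``cheating'' traversal can evade a failure penalty through clever rerouting elsewhere. The amplifier step is what makes this verification tractable: it lets me argue only about \emph{consistent} assignments to $\mathrm{Var}(i)$, because any inconsistent tour can be locally repaired using the matching and cycle equations at no net cost. Once these pieces are in place, Theorem~\ref{ssphybridsatz} immediately yields Theorem~\ref{thm:12gen:ks12}.
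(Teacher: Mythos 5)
Your proposal follows essentially the same route as the paper: variable (parity) gadgets with exactly two cheap traversals encoding truth values, equation gadgets that admit an all-weight-1 traversal exactly when satisfied and otherwise force an extra endpoint pair, a canonical-tour construction in one direction, a local rerouting/consistency lemma in the other, the wheel-amplifier argument to reduce to consistent assignments, and a final amortization of the additive lower-order terms to extract the $535/534$ gap from Theorem~\ref{ssphybridsatz}. The paper (following \cite{KS13}) instantiates exactly this plan with an $8$-vertex parity gadget and a three-variable gadget of size $3\cdot 8+3$, giving the base cost $8\cdot 60\nu+27\cdot 2\nu=534\nu$ with per-failure penalty $1$; your sketch correctly identifies this numerical tuning and the exhaustive local case analysis as the remaining work, which is precisely what the cited lemmas supply.
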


%
 %

\subsection{The Construction of $G^{12}_\cH$}
\label{sec:12oldconstr}
In the following, we describe briefly the reduction from the Hybrid problem 
to the (1,2)-TSP and refer for more details to  \cite{KS13} and \cite{KS12}.\\
\\  
Let $I_\cH$ be an instance of the Hybrid problem with $n$ wheels,
$60 \nu $ equations with two variables and $2\nu $ equations with two 
variables. In order to simulate the variables of $I_\cH$, we
introduce for each variable $x^l_i$ the corresponding parity 
gadget $P^l_i$ displayed 
in Figure~\ref{fig:12tsp:1}~(a). If we start in $v^l_{i,l1}$ or $v^l_{i,l0}$,
there are two ways to traverse this 
gadget visiting every vertex  only once. In the following, we
refer to those traversals as \emph{$0/1$-traversals},
which are defined in Figure~\ref{fig:12tsp:1} (b) and (c). 
\begin{figure}[h]
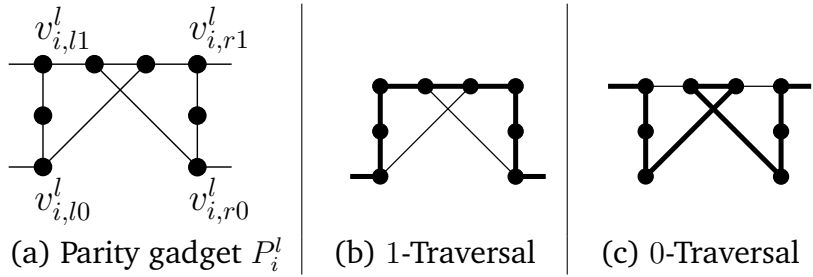

\begin{center}
\begin{tabular}[c]{c|c|c}
\input{paritydet.pdf_t} & \input{fig12tsp1a.pspdftex} & \input{fig12tsp1b.pspdftex} \\
(a) Parity gadget $P^l_i$ & (b) $1$-Traversal    & (c) $0$-Traversal 
\end{tabular}
\vspace{-0.5cm}
\end{center}
\caption{$0/1$-Traversals of the parity gadget $P^l_i$. 
Traversed edges are displayed by thick lines.}
\label{fig:12tsp:1}
\end{figure} 

\noindent
The idea of the parity gadgets is that any tour in the instance of the (1,2)-TSP
can be transformed into a tour, which uses only $0/1$-traversals of all parity gadgets
that are contained as a subgraph in $G^{12}_\cH$ without increasing its cost.
The $0/1$-traversal of the parity gadget defines the value that we assign to the 
variable  associated with the parity gadget.

For each equation, we have a specific way to connect the parity gadgets that 
are simulating the variables of the underlying equation. Let us start with
the construction for  matching equations.\\
\\
\textbf{Matching equations:} 
Given a matching equation $x^l_i \oplus x^l_j =0$ in $I_\cH$ with $i<j$ 
and the   
cycle equations
$x^l_i \oplus x^l_{i+1} =0$ and $x^l_j \oplus x^l_{j+1} =0$, we connect the associated 
parity gadgets $P^l_{i}$, $P^l_{i+1}$, $P^l_{\{i,j\}}$, $P^l_{j}$ and $P^l_{j+1}$
 as displayed in Figure~\ref{fig:12tsp:3}.
\begin{figure}[h]
\begin{center}
\input{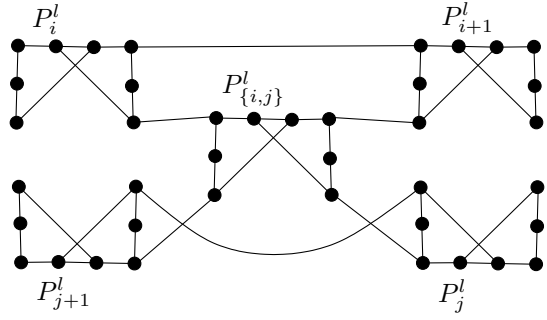}
\end{center}
\caption{Construction simulating the  equations $x^l_i \oplus x^l_j =0$,  
$x^l_i \oplus x^l_{i+1} =0$ and $x^l_j \oplus x^l_{j+1} =0$. }
\label{fig:12tsp:3}
\end{figure} 

\noindent
\textbf{Equations with three variables:}
For equations with three variables $ x\oplus y \oplus z=0=b^3_c$ 
in $I_\cH$,  we use the graph $G^{3\oplus}_c$ displayed in 
Figure~\ref{fig:12tsp:2}.  Engebretsen and Karpinski~\cite{EK06} introduced this graph 
and  proved the following statement.
\begin{lemma}[\cite{EK06}]
 \label{prop:ek06gadgettsp}
There is a simple path from    $s_c$ to  $s_{c+1}$ in Figure~\ref{fig:12tsp:2} 
 containing the vertices 
$v^1_c$ and  $v^2_c$  if and
only if an even number of parity gadgets is  traversed.
\end{lemma}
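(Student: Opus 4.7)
My plan is to argue this by a structural case analysis on how a putative simple path $P$ from $s_c$ to $s_{c+1}$, required to contain both $v^1_c$ and $v^2_c$, can interact with the three ``variable branches'' of $G^{3\oplus}_c$ that attach to the parity gadgets for $x,y,z$. To each branch I associate a boolean label $\beta \in \{0,1\}$: $\beta=1$ if the attached parity gadget is traversed by $P$ (equivalently, $P$ enters and exits through the branch's interface to that gadget), and $\beta=0$ otherwise. The goal is to show that the internal wiring of $G^{3\oplus}_c$ admits a simple $s_c$--$s_{c+1}$ path through $v^1_c,v^2_c$ precisely when $\beta_x\oplus\beta_y\oplus\beta_z=0$.

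First I would classify, for each branch independently, the possible local behaviors of a simple path, using the fact that internal vertices of the branch must either be completely skipped or be traversed with exactly two incident path-edges. This local inventory gives a small finite set of ``port patterns'' at the three interface points where the branch meets the central sub-gadget containing $s_c$, $s_{c+1}$, $v^1_c$, and $v^2_c$. Then I would feed these port patterns into the central sub-gadget and check which combinations can be completed to a Hamiltonian-like simple path from $s_c$ to $s_{c+1}$ that is forced to pass through both $v^1_c$ and $v^2_c$; the degree-$2$ constraint along $P$ at each internal vertex will eliminate most combinations.

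The forward direction (path $\Rightarrow$ even parity) would follow by observing that the central sub-gadget is designed so that each branch contributes a ``half-turn'' whose parity matches $\beta$, and that closing a simple $s_c$--$s_{c+1}$ path through $v^1_c,v^2_c$ requires the total number of such half-turns to be even. The converse direction (even parity $\Rightarrow$ path) I would prove by exhibiting, for each of the four even assignments $(\beta_x,\beta_y,\beta_z)\in\{(0,0,0),(1,1,0),(1,0,1),(0,1,1)\}$, an explicit simple path realizing the required traversal; by the symmetry of the gadget in the three branches, it suffices to treat $(0,0,0)$ and one representative of $(1,1,0)$.

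The main obstacle I anticipate is the bookkeeping in the central sub-gadget: ensuring that every valid combination of branch port patterns with even total parity really does extend to a simple path visiting both $v^1_c$ and $v^2_c$, and that no odd-parity combination can be patched by some unexpected shortcut. I would handle this by reducing the central piece to a small auxiliary graph (collapsing each branch into a single labeled edge encoding its port pattern) and verifying, by direct inspection of this finite auxiliary graph, the parity equivalence; this makes the argument mechanical once the local branch analysis is done.
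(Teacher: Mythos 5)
The paper does not actually prove Lemma~\ref{prop:ek06gadgettsp}: it is imported verbatim from Engebretsen and Karpinski~\cite{EK06}, and the gadget $G^{3\oplus}_c$ is only given pictorially in Figure~\ref{fig:12tsp:2}. So the relevant comparison is between your sketch and the (omitted, cited) argument of \cite{EK06}, which is indeed an exhaustive case analysis of the kind you describe.

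That said, your proposal has a genuine gap: it is a plan for a proof rather than a proof. Every load-bearing step is deferred. You never write down the vertex and edge set of $G^{3\oplus}_c$, the three ``branches,'' or the interface points at which the parity gadgets attach, so the ``small finite set of port patterns'' is never enumerated and the ``direct inspection of this finite auxiliary graph'' is never carried out. Worse, the forward direction as you state it is circular: you justify (path $\Rightarrow$ even parity) by asserting that ``the central sub-gadget is designed so that each branch contributes a half-turn whose parity matches $\beta$'' and that closing the path ``requires the total number of such half-turns to be even'' --- but that is precisely the property the lemma claims, restated rather than derived from the gadget's structure. The converse direction likewise promises explicit paths for the assignments $(0,0,0)$ and $(1,1,0)$ without exhibiting them, and the symmetry claim used to reduce to two cases is itself unverified (it depends on the gadget actually being symmetric in its three branches, which must be read off the concrete construction). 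To turn this into a proof you would need to fix the concrete graph from Figure~\ref{fig:12tsp:2} (or from \cite{EK06}), list the admissible local configurations at each attachment point under the degree-two constraint along a simple path, and then perform the finite check in both directions; until that is done, nothing in the proposal rules out an odd-parity traversal completed by an ``unexpected shortcut,'' which is exactly the failure mode you yourself flag.
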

\noindent
We now explain how we connect the parity gadgets for $x_i$ and $x_{i+1}$ with 
$G^{3\oplus}_c$: 
Let us assume that  $x_i \oplus y \oplus z = 0$ and $x_{i}  \oplus x_{i+1}  =0$
are equations in $I_\cH$. We denote the parity gadgets that appear in $G^{3\oplus}_c$
as $P_{(x,i)}$, $P_y$ and $P_z$. 

Then, we connect $P^l_i$ and $P^l_{i+1}$ 
with $P_{(x,i)}$ via edges $\{ v^l_{i, r0}, v^l_{(x,i), r1}   \}$, 
$\{ v^l_{i+1, l0}, v^l_{(x,i), l1}   \}$. Furthermore, we add
$\{v^l_{i, r1}, v^l_{i+1, l1} \}$ to connect $P^l_i$ and $P^l_{i+1}$.
If $x_i$ appears negated in the equation
with three variables, we create  $\{ v^l_{i, r1}, v^l_{(x,i), r1}   \}$ and
$\{ v^l_{i+1, l1}, v^l_{(x,i), l1}   \}$ and 
$\{v^l_{i, r0}, v^l_{i+1, l0} \}$.

\begin{figure}[h]
\begin{center}
\input{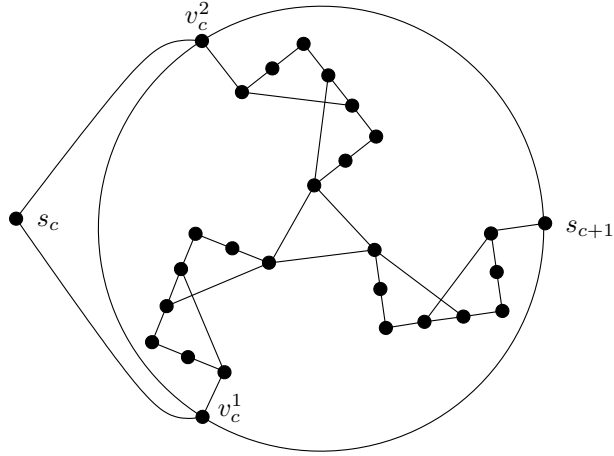}
\end{center}
\caption{Graph $G^{3\oplus }_c$ simulating  $x\oplus y \oplus z=0$.}
\label{fig:12tsp:2}
\end{figure} 

\noindent
\textbf{Cycle border equations:}
For each wheel $\cW_l$ with $l\in [n]$, we 
introduce 
three vertices  $b^1_l$, $b^2_l$ and $b^3_l$, which are connected  via $b^1_l-b^2_l-b^3_l$.
 Let  $\{ x^l_i\}^\alpha_{i=1}$
be the associated set of variables of $\cW_l$. 
Then, we connect $b^3_l$ with $v^l_{1,l0}$ and $v^l_{2,r1}$. In addition, we add
  $\{b^{1}_{l+1},  v^l_{1,l1} \}$ and $\{b^{1}_{l+1},  v^l_{2,r0} \}$.\\ 
For the last wheel, we introduce the path $b^1_{n+1}-b^2_{n+1} - s_1 $, where
$s_1$ is the starting vertex of the graph $G^{3\oplus }_1$ simulating an 
equation with three variables. The graphs corresponding  to 
equations with three variables are connected via vertices $s_1, \ldots s_{2\nu +1}$,
where $s_{2\nu +1} =  b^1_1$ is the first vertex of the path $b^1_1-b^2_1-b^3_1$.  
This is the whole description of the corresponding graph $G^{12}_{\cH}$.
\subsection{Assignment to Tour}

We are going to prove one direction of the reduction and prove the 
following lemma.
\begin{lemma}
\label{lem:gen12:easy}
Let $I_\cH$ be an instance of the Hybrid problem with $n$ wheels,
$60\nu $ equations with two variables and $2\nu $ equations with 
three variables and $\phi$ an assignment that leaves $\delta \nu   $
equations in $I_\cH$ unsatisfied for a constant $\delta \in  (0,1)$. 
Then, there is a tour in $G^{12}_\cH$
with cost at most $534\nu + 3(n + 1)-1 + \delta \nu $.
\end{lemma}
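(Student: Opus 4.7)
The plan is to produce the tour directly from $\phi$ in three stages: normalise the assignment, traverse each parity gadget according to the assigned bit, and splice in the three-variable gadgets. First I would apply Theorem~\ref{ssphybridsatz}(iv) to replace $\phi$ by a consistent assignment $\phi'$ satisfying all $60\nu$ two-variable equations without decreasing the number of satisfied equations. After this normalisation, the at most $\delta\nu$ unsatisfied equations under $\phi'$ all lie among the $2\nu$ three-variable equations, and for each wheel $\cW_l$ there is a single bit $b_l\in\{0,1\}$ with $\phi'(x^l_j)=b_l$ for every $j$.

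Next I would traverse every parity gadget $P^l_i$ appearing in $G^{12}_\cH$ using the $b_l$-traversal depicted in Figure~\ref{fig:12tsp:1}. The matching-equation construction of Figure~\ref{fig:12tsp:3} and the cycle-border construction chain consecutive parity gadgets through their left and right boundary vertices, so on each wheel $\cW_l$ the individual $b_l$-traversals merge into a single path entering and exiting via the three border vertices $b^1_l, b^2_l, b^3_l$. For each three-variable equation $x\oplus y\oplus z=0$ that $\phi'$ satisfies, Lemma~\ref{prop:ek06gadgettsp} supplies a simple $s_c$-to-$s_{c+1}$ path through $G^{3\oplus}_c$ compatible with the $0/1$-traversals of its three embedded parity gadgets; for each unsatisfied equation the same template still yields a valid piece at the extra cost of one weight-two edge. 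Splicing these pieces consecutively through $s_1,\dots,s_{2\nu+1}=b^1_1$ and closing through the wheel blocks produces a single tour of $G^{12}_\cH$.

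The final step is bookkeeping. Summing the $0/1$-traversal costs across all parity gadgets matches the $534\nu$ main term of the accounting carried out in \cite{KS13}; the border paths $b^1_l - b^2_l - b^3_l$ together with their weight-two connecting edges contribute $3(n+1) - 1$ once the identification $s_{2\nu+1} = b^1_1$ is taken into account; and each of the at most $\delta\nu$ unsatisfied three-variable equations contributes a single extra weight-two edge, giving the claimed bound $534\nu + 3(n+1) - 1 + \delta\nu$.

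The main obstacle is verifying that the local traversals glue into a single Hamiltonian cycle of $G^{12}_\cH$ rather than a disjoint union of cycles, and that the endpoint count matches the stated bound on the nose. I would address this by induction on the wheels, maintaining the invariant that after processing $\cW_l$ the partial tour is a single path whose free endpoint lies at the port feeding the next gadget in sequence; the three-variable gadgets are then inserted one by one between the wheel blocks via the vertices $s_c$, and closing the tour at $s_{2\nu+1} = b^1_1$ finishes the construction.
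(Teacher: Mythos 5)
Your proposal is correct and follows essentially the same route as the paper: normalise $\phi$ via Theorem~\ref{ssphybridsatz} so each wheel carries a single bit, build the inner loop through the wheels using the $b_l$-traversals of the parity gadgets and the border paths $b^1_l-b^2_l-b^3_l$, then splice in the $G^{3\oplus}_c$ gadgets as an outer loop, paying one extra unit (two endpoints) for each unsatisfied three-variable equation, and sum $8\cdot 60\nu+(3\cdot 8+3)\cdot 2\nu+3(n+1)-1+\delta\nu$. The only difference is that you flag and sketch an inductive argument for the gluing of the local pieces into a single tour, which the paper leaves implicit.
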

\begin{proof}
According to Theorem~\ref{ssphybridsatz}, we may assume that all variables
 associated to a wheel take the same value under $\phi$. 
 Our tour in $G^{12}_\cH$ starts in $b^1_1$ and traverses $b^1_1-b^2_1- b^3_1 $.
Then, we use the $\phi(x^1_1)$-traversals of the parity gadgets corresponding
to the variables of the wheel $\cW_1$ until  we enter the vertex $b^1_2$.
For each wheel, we use the corresponding traversal defined by the assignment.
Finally, we get to the vertex $s_1$, which belongs to the graph $G^{3\oplus}_1$.
We refer to this part of the tour as the \emph{inner loop}.
In the remaining part of the tour, we are going to traverse the graphs corresponding
to equations with three variables. If an odd number of parity gadgets
was visited in the inner loop, we can find a Hamiltonian path in $G^{3\oplus}_c$.
In the other case, we have to introduce two endpoints.
In the \emph{outer loop} of the tour, we visit all gadgets corresponding
to equations with three variables. Accordingly, our tour has cost at most 
$8\cdot 60\nu + (3\cdot 8 + 3) \cdot 2\nu + 3(n+1) -1+\delta \nu $.
\end{proof}
\subsection{Tour to Assignment}
In the following, we briefly describe the other part  of the reduction
given in \cite{KS13}.\\
\\
Let us first  introduce the notion of consistent tours. 
We call a (1,2)-tour $\pi$ in $G^{12}_\cH$ \emph{consistent} if all
parity gadgets in $G^{12}_\cH$ are visited by $\pi $ using a $0/1$-traversal.  
In order to ensure that we can restrict ourselves to consistent (1,2)-tours, 
the following  statement  can be proved. 

\begin{lemma}[\cite{KS13}]
\label{lem:12gen:cons}
Let $\pi$ be a tour in $G^{12}_\cH$. For every parity gadget $P$ in $G^{12}_\cH$,
it is possible to convert efficiently $\pi$ into a tour $\sigma$ in $G^{12}_\cH$, that
uses a $0/1$-traversal of $P$, without increasing the cost.
\end{lemma}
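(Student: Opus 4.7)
\textbf{Proof proposal for Lemma~\ref{lem:12gen:cons}.}
The plan is to perform a local surgery on $\pi$ at the parity gadget $P=P^l_i$, modifying only the edges of $\pi$ in a small neighborhood of $P$. Since $P$ communicates with the rest of $G^{12}_\cH$ only through the four boundary vertices $v^l_{i,l0}, v^l_{i,l1}, v^l_{i,r0}, v^l_{i,r1}$, the restriction of $\pi$ to $V(P)$ decomposes into vertex-disjoint simple paths $Q_1,\ldots,Q_k$ whose union covers every vertex of $P$. First I would run a preparatory clean-up replacing every weight-$2$ edge of $\pi$ with both endpoints in $V(P)$ by a weight-$1$ path inside $P$; because $P$ is a small fixed gadget, this is a finite case check that never increases the cost. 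After this step, each $Q_j$ consists only of weight-$1$ edges, and each of its two endpoints either lies on the boundary of $P$ or is the head of a weight-$2$ edge that $\pi$ uses to leave $P$ toward another part of $G^{12}_\cH$.

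Recall that a $0$-traversal (resp.\ $1$-traversal) is a Hamiltonian path of $P$ with endpoints $\{v^l_{i,l0},v^l_{i,r0}\}$ (resp.\ $\{v^l_{i,l1},v^l_{i,r1}\}$). The goal is to replace the family $\{Q_1,\ldots,Q_k\}$ by one such Hamiltonian path without inflating the total number of weight-$2$ edges of $\pi$. I would proceed by a case analysis on the endpoint multiset of the $Q_j$. In each case, I delete $\bigcup_j E(Q_j)$, insert whichever of the $0$-traversal or $1$-traversal is better aligned with the external stubs left dangling, and then splice those stubs together using the minimum number of weight-$2$ edges. A charging argument then shows the cost does not increase: every boundary vertex that is dropped as a $Q_j$-endpoint either was already the tail of a weight-$2$ edge of $\pi$ (which disappears), or releases two external stubs that can be paired up via a single weight-$2$ edge, the cost of which is absorbed by the pair of endpoints freed inside $P$ when $k$ drops to $1$.

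The main obstacle will be connectivity: after splicing, the object must be a single Hamiltonian cycle rather than a disjoint union of shorter cycles. I plan to handle this by exploiting the freedom between the $0$- and the $1$-traversal, and, when neither direct insertion yields a single cycle, by applying a $2$-opt exchange on two of the external weight-$2$ edges to merge two subcycles at no extra cost. Because the entire construction is confined to $V(P)$ together with the external stubs of $\pi$ at its boundary, all other parity gadgets of $G^{12}_\cH$ are untouched; iterating the lemma over the parity gadgets one at a time therefore yields a fully consistent tour, as needed for the subsequent tour-to-assignment direction of the reduction.
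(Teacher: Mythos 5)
The paper does not actually prove this lemma: it is imported verbatim from \cite{KS13}, so there is no in-paper argument to measure you against. Your overall strategy --- confine the surgery to $V(P)$ and its boundary stubs, classify the tour fragments $Q_1,\ldots,Q_k$ covering $P$, swap them for a single $0$- or $1$-traversal, and pay for the reconnection out of the endpoints you free inside $P$ --- is the standard one for consistency lemmas of this type and is almost certainly what \cite{KS13} does. But as written the proposal has gaps that a referee would not let pass. First, the ``preparatory clean-up'' is incoherent: a weight-$2$ edge of $\pi$ with both ends in $V(P)$ is a chord between two vertices that are already covered by other fragments of the tour, so you cannot ``replace it by a weight-$1$ path inside $P$'' without revisiting vertices; the right move is simply to treat its two ends as endpoints (in the paper's sense) and fold them into the general accounting. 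Second, the charging argument is asserted rather than carried out; the whole content of the lemma is the finite case analysis over the possible endpoint multisets of the $Q_j$ on the specific gadget of Figure~\ref{fig:12tsp:1}, and you never verify even one case, nor that the $0/1$-traversals really are Hamiltonian paths between the corner pairs you assume.

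The most serious gap is the connectivity repair. A $2$-opt exchange that merges two subcycles is cost-neutral only when the two deleted edges already cost at least as much as the two inserted ones --- in practice, when each subcycle contributes a weight-$2$ edge to the exchange. After you rip out $\bigcup_j E(Q_j)$ and insert a Hamiltonian traversal of $P$, nothing in your argument prevents the new traversal together with two retained external weight-$1$ stubs from closing into a short all-weight-$1$ subcycle; merging that subcycle into the rest of the tour then costs $+1$, which destroys the lemma. This is exactly the situation the choice between the $0$- and the $1$-traversal (and the choice of which stubs to retain) must be shown to avoid, and that requires the explicit case analysis you defer. Until those cases are enumerated and the single-cycle property is verified in each, the proof is a plausible plan rather than a proof.
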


Due to the following lemma, we can construct efficiently an assignment if we are 
given a consistent tour in $G^{12}_\cH$. 

\begin{lemma}[\cite{KS13}]
\label{lem:12gen:hard}
Let $\pi$ be a consistent tour in $G^{12}_\cH$ with cost $534\nu + 3(n + 1)-1+\delta \nu $
for some constant $\delta \in (0,1) $. Then, it is
possible to construct efficiently an assignment  
that leaves at most $\delta \nu$ equations in $I_\cH$
unsatisfied.
 \end{lemma}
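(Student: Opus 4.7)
The plan is to read an assignment directly from the $0/1$-traversals of the parity gadgets in $\pi$, and then charge each extra weight-$2$ edge used by $\pi$ to an equation of $I_\cH$ left unsatisfied by this assignment.

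First, define an initial assignment $\phi_0$ to the variables of $I_\cH$ by setting $\phi_0(x^l_i) = a$ whenever the parity gadget $P^l_i$ is traversed by $\pi$ in the $a$-traversal for $a\in\{0,1\}$; this is well defined because $\pi$ is consistent. By Theorem~\ref{ssphybridsatz}$(iv)$, we may then transform $\phi_0$ in polynomial time into an assignment $\phi$ that satisfies all $60\nu$ two-variable equations and leaves no more equations of $I_\cH$ unsatisfied than $\phi_0$ does. So it is enough to bound the number of equations unsatisfied by $\phi_0$.

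Second, translate the cost bound into a bound on extra weight-$2$ edges. Writing $N:=|V(G^{12}_\cH)|$, the cost of $\pi$ equals $N+p$ where $2p$ is the total number of endpoints. The forward direction Lemma~\ref{lem:gen12:easy}, specialized to $\delta=0$, shows that the minimum cost of a consistent tour is $534\nu + 3(n+1)-1$, so the hypothesis forces $\pi$ to contain at most $\delta \nu$ more weight-$2$ edges (equivalently $2\delta\nu$ more endpoints) than this optimum. It suffices therefore to show that each equation unsatisfied by $\phi_0$ costs $\pi$ at least one additional weight-$2$ edge located inside the corresponding gadget region.

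Third, perform the case analysis over the two equation types. For a matching or cycle equation $x^l_i\oplus x^l_j=0$ inspected through the connection of Figure~\ref{fig:12tsp:3}, if the two parity gadgets are traversed by $\pi$ with different $0/1$-labels, then the only way to extend the two partial traversals into a closed walk inside this block is to leave the induced subgraph through a non-edge of $G^{12}_\cH$, contributing at least one extra weight-$2$ edge. For an equation with three variables simulated by $G^{3\oplus}_c$, Lemma~\ref{prop:ek06gadgettsp} is decisive: a simple $s_c$-to-$s_{c+1}$ path through $v^1_c$ and $v^2_c$ exists exactly when the XOR is satisfied under $\phi_0$; otherwise $\pi$ must incur at least one weight-$2$ edge (a pair of endpoints) inside $G^{3\oplus}_c$. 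Summing, the number of equations unsatisfied by $\phi_0$ is bounded above by the total excess of weight-$2$ edges of $\pi$, which is $\delta\nu$.

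The main obstacle will be the bookkeeping in the third step, since a priori an extra weight-$2$ edge could be shared between two neighboring gadgets and so be charged twice. To rule this out, one exploits that parity gadgets and the $G^{3\oplus}_c$ graphs are attached to the rest of $G^{12}_\cH$ only through the named contact vertices $v^l_{i,ab}$ and the vertices $s_c$, so that once the $0/1$-traversals of all parity gadgets are fixed the tour inside each equation region is essentially decoupled from the rest. This locality allows one to partition the extra endpoints of $\pi$ among the gadgets and assign each surplus weight-$2$ edge to a unique unsatisfied equation, completing the proof.
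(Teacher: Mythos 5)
First, a point of reference: the paper does not actually prove Lemma~\ref{lem:12gen:hard}; it is imported verbatim from \cite{KS13}, so there is no in-paper proof to measure you against. That said, your skeleton (read $\phi_0$ off the $0/1$-traversals, use Theorem~\ref{ssphybridsatz}$(iv)$ to clean it up, convert the cost surplus into a budget of weight-2 edges via cost $=N+p$, and charge each unsatisfied equation to a distinct surplus weight-2 edge) is exactly the intended argument and is consistent with how the rest of the paper uses the lemma.

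Two things keep this from being a proof rather than a plan. Minor: Lemma~\ref{lem:gen12:easy} with $\delta=0$ does not show that $534\nu+3(n+1)-1$ is the \emph{minimum} cost of a consistent tour (it is an upper bound conditional on a satisfying assignment existing); what you actually need, and what is true, is that $|V(G^{12}_\cH)|=534\nu+3(n+1)-1$, so that the hypothesis forces $p\le\delta\nu$ weight-2 edges. Major: the entire content of the lemma is the injectivity of the charging in your third step, and you only assert it via a locality heuristic. The blocks of Figure~\ref{fig:12tsp:3} are the dangerous case: a single block of five parity gadgets simultaneously simulates three equations ($x^l_i\oplus x^l_j=0$, $x^l_i\oplus x^l_{i+1}=0$, $x^l_j\oplus x^l_{j+1}=0$), and these share connection edges, so you must verify that $k$ unsatisfied equations within one block force at least $k$ endpoint pairs there, not just one. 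Likewise, for $G^{3\oplus}_c$ you need the bridge between Lemma~\ref{prop:ek06gadgettsp} (parity of the number of gadgets traversed in the outer loop) and satisfaction of the XOR under $\phi_0$, which requires tracking how the inner-loop traversal of the wheel gadget for $x_i$ determines whether $P_{(x,i)}$ is available to the outer loop. Until that case analysis is carried out, the bound ``unsatisfied equations $\le p$'' is unsupported; everything else in your write-up is routine.
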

\indent
We are ready to give the proof of Theorem~\ref{thm:12gen:ks12}. 
\begin{proof}[Proof of Theorem~\ref{thm:12gen:ks12}]
Let  $I$ be an instance of the MAX-E3LIN2 problem with $n$ variables
and $2\gamma $ equations. 
For all $\tau > 0$, there exists a constant $k$ such that if we create $k$
copies of  each equation, we get an instance $I^k$
 with $2\nu  = k\cdot 2\gamma$ equations and $n$ variables with $3(n + 1)+1 \leq \nu \cdot \tau $.
From $I^k$, we generate an instance $I_\cH$ of the Hybrid problem 
 consisting of $n$ wheels,
 $60 \nu $ equations with two variables
and $2\nu $ equations with three variables. Finally, we
 construct  the associated instance $G^{12}_{\cH}$ of the
(1,2)-TSP.

Given an assignment $\phi$ to the variables of $I_\cH$ leaving $\delta \cdot \nu$ equations 
unsatisfied with $\delta \in (0, 1)$, 
according to Lemma~\ref{lem:gen12:easy},  
there is a tour 
with  length at most $534\nu 
 + 3(n + 1)-1+ \delta \cdot \nu$.

On the other hand, if we are given a tour $\sigma$ in $G^{12}_{\cH}$ with cost
$534\nu + 3(n + 1)-1+ \delta \cdot  \nu $, it is 
possible to transform $\sigma$ in polynomial time into a consistent tour $\pi$ 
without increasing the cost by applying Lemma~\ref{lem:12gen:cons} to each parity gadget
in $G^{12}_\cH$. Moreover, due to Lemma~\ref{lem:12gen:hard},   
we are able to construct efficiently an assignment, which leaves
at most $\delta \nu $ equations in $I_\cH$ unsatisfied. 

 According to Theorem~\ref{ssphybridsatz}, we know that for all $\eps> 0$, it 
is \clNP-hard to decide 
 whether there is a tour with cost
at most $ 534 \nu+ 3(n + 1)-1+ \eps \cdot \nu
 \leq 534\cdot \nu +\eps'\nu$
or all tours have cost at least $534\cdot \nu + (1 -\eps) \nu+3(n + 1)-1  \geq 535 \cdot 
\nu
-\eps'\cdot \nu  $, for some $\eps'$ which depends only on $\eps$ and $\tau$.
By appropriate choices for $\epsilon$ and $ \tau$, 
the ratio between these two cases can get arbitrarily close to $535/534$ 
\end{proof}


%
%
%
%

\section{ (1,2)-TSP in Subcubic Graphs}
\label{sec:12tspsubcub}
In this section, we are going to define a new outer loop of the construction from 
the previous section in order to obtain an instance of the 
(1,2)-TSP in subcubic graphs.\\
\\
The gadgets simulating equation with 
three variables in the  construction given in \cite{KS13} contain vertices 
with degree 5. We are going to replace these gadgets 
 by cubic graphs which we will specify later on. 
In order to understand the cubic gadgets, we first 
describe a reduction from
the \maxlin~to the MAX-2in3SAT problem. The reduction 
is straightforward: Given an equation of the form $x\oplus y \oplus z=0$,
we create  three clauses  $(x \vee a_1 \vee a_2 )$, $(y \vee a_2 \vee a_3 )$
and $(z \vee a_1 \vee a_3 )$. Note that if we are given an assignment to $x,y$ and 
$z$ that satisfies the equation mod 2, then, it is possible to find an assignment 
to $a_1$, $a_2$ and $a_3$ that satisfies  all three corresponding clauses.
In the other case, we find assignments to $a_1$, $a_2$ and $a_3$ that make 
at most two clauses satisfied.

\begin{figure}[h]
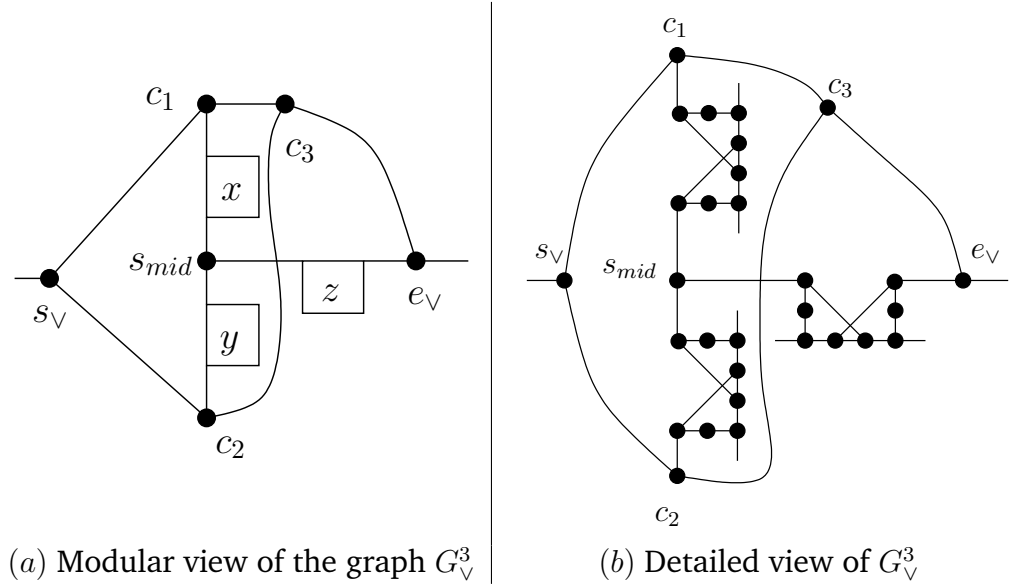

\begin{center}
\begin{tabular}[c]{c|c}
\input{max2in3.pdf_t} & ~ \input{max2in3det.pdf_t}  \\
$(a)$~Modular view of the graph $G^3_\vee$  & $(b)$~Detailed view of $G^3_\vee$ 
\end{tabular}
\end{center}
\caption{The graph $G^3_\vee$ corresponding to $(x\vee y \vee z)$.}
\label{fig:3}
\end{figure} 

In the next step, we are going to design a gadget  that simulates the 
predicate 2in3SAT. This gadget is displayed in Figure~\ref{fig:3} (a).
The boxes can be viewed as modules,  which will be replaced with 
a parity gadget or a graph with similar properties (see Figure~\ref{fig:2}). 
Any graph with less vertices and the properties of a parity gadget will 
lead to improved inapproximability factors for the corresponding problems.
Note that the graph in Figure~\ref{fig:3} (b) has degree at most $3$. 

We are going to 
prove the following lemma.
\begin{lemma}
There is a Hamiltonian path from $s_\vee$ to $e_\vee$ in the graph displayed in 
Figure~\ref{fig:3} (a) if and only if 2 edges with modules are traversed.
\end{lemma}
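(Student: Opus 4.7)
The plan is to analyze the modular view $\widetilde{G}^3_\vee$ of Figure~\ref{fig:3}(a), regarding each boxed module as a single labelled edge. Call these three edges the \emph{module edges} $x, y, z$ and the remaining edges the \emph{non-module edges}. Any $s_\vee$--$e_\vee$ Hamiltonian path in $\widetilde{G}^3_\vee$ visits the six vertices $s_\vee, s_{mid}, c_1, c_2, c_3, e_\vee$ with degree $1$ at each of $s_\vee, e_\vee$ and degree $2$ at each of the four internal vertices, so the question reduces to a finite combinatorial check on a very small graph.

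For the forward (``only if'') direction I would argue by parity together with a short local case analysis. By construction, each clause vertex $c_i$ is the unique endpoint of exactly one module edge, and its remaining incident edges are non-module edges joining $c_i$ to $s_{mid}$, $s_\vee$, or $e_\vee$. Using the degree-$2$ constraint at each $c_i$ and at $s_{mid}$, a double count of the edges incident to $\{c_1, c_2, c_3\}$ that are not used forces the number of module edges in the path to be even, hence $0$ or $2$. I would then rule out the $0$-case by direct inspection of the non-module subgraph: without any module edge, at least one $c_i$ is separated from $\{s_\vee, e_\vee\}$ (or $s_{mid}$ becomes a cut vertex of the wrong type), so no $s_\vee$--$e_\vee$ Hamiltonian path exists. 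Hence exactly two module edges are used.

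For the reverse (``if'') direction I would exhibit, for each of the three pairs $\{x,y\}, \{y,z\}, \{x,z\}$, an explicit Hamiltonian $s_\vee$--$e_\vee$ path that uses exactly those two module edges, routed through $s_{mid}$ so as to visit the unique $c_i$ whose module edge is unused. Because the three clauses play symmetric roles (this symmetry is inherited from the $a_1 \leftrightarrow a_2 \leftrightarrow a_3$ symmetry in the underlying MAX-2in3SAT reduction), it suffices to write down one such path explicitly and to invoke symmetry for the remaining two cases.

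The main obstacle I expect is the elimination of the ``$0$ module edges used'' configuration, since parity alone only rules out odd numbers. This requires a genuine structural inspection of Figure~\ref{fig:3}(a): one must verify that the non-module subgraph of $\widetilde{G}^3_\vee$ contains no $s_\vee$--$e_\vee$ Hamiltonian path, for example by exhibiting a degree-$1$ vertex among the $c_i$ or a cut that isolates some $c_i$. Once this small finite check is carried through and the three Hamiltonian paths of the ``if'' direction are written down, the lemma follows.
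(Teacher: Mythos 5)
Your overall plan---treat Figure~\ref{fig:3}(a) as a six-vertex graph, exploit the degree constraints at the internal vertices, and finish with a small finite check---is sound in spirit and close to what the paper does (the paper anchors its case analysis at $s_{mid}$, whose neighbours are $c_1$, $c_2$ and $e_\vee$, and forces out all Hamiltonian $s_\vee$--$e_\vee$ paths). The gap is in your forward direction, which rests on a structural premise that does not hold for this gadget. You assume each clause vertex $c_i$ is the endpoint of exactly one module edge, with its remaining edges going to $s_{mid}$, $s_\vee$ or $e_\vee$. In the actual graph the eight edges are $\{s_\vee,c_1\}$, $\{s_\vee,c_2\}$, $\{s_{mid},c_1\}$, $\{s_{mid},c_2\}$, $\{s_{mid},e_\vee\}$, $\{c_1,c_3\}$, $\{c_2,c_3\}$, $\{c_3,e_\vee\}$, and the three module edges are precisely the three edges incident to $c_3$; so $c_3$ meets all three modules, $c_1$ and $c_2$ meet one each, and two of the modules join two clause vertices. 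Consequently your double count over $\{c_1,c_2,c_3\}$ does not yield the parity of the traversed modules: it gives $2u+v=3$ for the unused edges inside and leaving that set, which says nothing about how many of those unused edges are modules unless one already knows where the modules sit. Even under your assumed structure the step from ``$2u+v=3$'' to ``an even number of module edges is used'' is not justified. This is the genuine gap.

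With the correct structure the ``only if'' direction is immediate and needs neither parity nor a separate elimination of the zero case: $c_3$ is an internal vertex of any Hamiltonian $s_\vee$--$e_\vee$ path, so exactly two of its three incident edges---which are exactly the module edges---are traversed. (Your observation that with no module edge some $c_i$ is cut off is correct here, since $c_3$ is isolated in the non-module subgraph, but it is subsumed.) For the ``if'' direction your plan of exhibiting explicit paths is fine, but the symmetry you invoke is too strong: only the transposition $c_1\leftrightarrow c_2$ is an automorphism ($e_\vee$ is a distinguished degree-two endpoint), so the module $\{c_3,e_\vee\}$ is not interchangeable with $\{c_1,c_3\}$ or $\{c_2,c_3\}$. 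You need two explicit paths, e.g.\ $s_\vee-c_1-s_{mid}-c_2-c_3-e_\vee$ (realizing the pair $\{c_2,c_3\},\{c_3,e_\vee\}$) and $s_\vee-c_2-c_3-c_1-s_{mid}-e_\vee$ (realizing $\{c_1,c_3\},\{c_2,c_3\}$), the remaining pair following from the $c_1\leftrightarrow c_2$ swap; these are exactly the paths the paper's case analysis produces.
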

\begin{proof}
There are three possibilities to enter the vertex $s_{mid}$. Therefore, 
a Hamiltonian path in $G^3_\vee$ contains $(i)$~$c_1 -  s_{mid} - c_2 $, 
 $(ii)$~$c_1 -  s_{mid} - e_\vee $ or  $(iii)$~$c_2-   s_{mid} -  e_\vee $.
 In the case $(i)$, we are forced to use $ \{c_3,e_\vee\}$ and 
then, either $\{s_\vee,  c_1 \}$ and  $ \{c_3,c_2\}$
or $\{s_\vee,  c_2 \}$ and $ \{c_3,c_1\}$. In the case $(ii)$, we first note that we cannot use 
$\{e_\vee,c_3\}$.
Due to the degree condition, we have to  use $c_2- c_3 - c_1$.
The only remaining vertex with degree one is $c_2$ and has to be connected to $s_\vee$.
In case $(iii)$, we may argue similarly to case $(ii)$.
\end{proof}
As for the next step, we introduce a gadget that simulates $a^1_1 \oplus a^2_1 =0$
displayed in Figure~\ref{fig:4}. 
We see that in order to get from the vertex $s_1$ to $e_1$,
we simply use the edge $\{s_1, e_1  \} $ or the three edges which are connecting
the two parity gadgets. 
 
\begin{figure}[h]
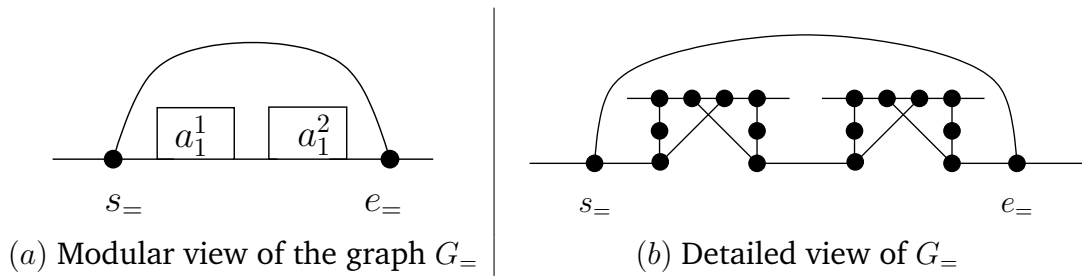

\begin{center}
\begin{tabular}[c]{c|c}
\input{equal.pdf_t} & ~ \input{equaldet.pdf_t}  \\
$(a)$~Modular view of the graph $G_{=}$  & $(b)$~Detailed view of $G_{=}$ 
\end{tabular}
\end{center}
\vspace{-0.5cm}
\caption{Graph $G_{=}$ corresponding to $a^1_1 \oplus a^2_1 =0 $.}
\label{fig:4}
\end{figure} 
\noindent
We are ready to  describe the construction that simulates the equation  
$x\oplus y \oplus z =0 $:
We create three copies of the gadget $G^3_\vee$, denoted by $G^{31}_\vee$, $G^{32}_\vee$
 and $G^{33}_\vee$, to simulate $(x \vee a^{1}_1 \vee a^{1}_2)$,
$(y \vee a^{2}_2 \vee a^{1}_3)$ and $(z \vee a^{2}_1 \vee a^{2}_3)$. For each $i\in [3]$,
the vertex set of $G^{3i}_\vee$ is defined by 
$\{s^i_\vee, c^i_1,  c^i_1, c^i_2, c^i_3 , e^i_\vee, s^i_{mid} \}$. 
In order to
connect those three copies, we add the edge $\{e^i_\vee, s^{i+1}_\vee \}$ for 
each $i\in [2]$. 
In the next step, we 
create three copies of the gadget $G_{=}$, denoted by $G^1_{=}$, $G^2_{=}$
and $G^3_{=}$, to simulate $a^1_1 \oplus a^2_1 =0$,
$a^1_2 \oplus a^2_2 =0$ and   $a^1_3 \oplus a^2_3 =0$. 
For each $i\in [3]$,
the vertex set of $G^{i}_=$ is defined by 
$\{s^i_=, e^i_= \}$. 
Again, we connect 
those three copies by adding $\{e^i_=, s^{i+1}_=\}$ for each $i\in [2]$
and we also create   $\{e^3_\vee, s^{1}_=  \}$ 
in order to connect  $G^3_{\vee}$ with $G^1_{=}$. The whole construction 
is illustrated in Figure~\ref{fig:9}.

\begin{figure}[h]
\begin{center}
\input{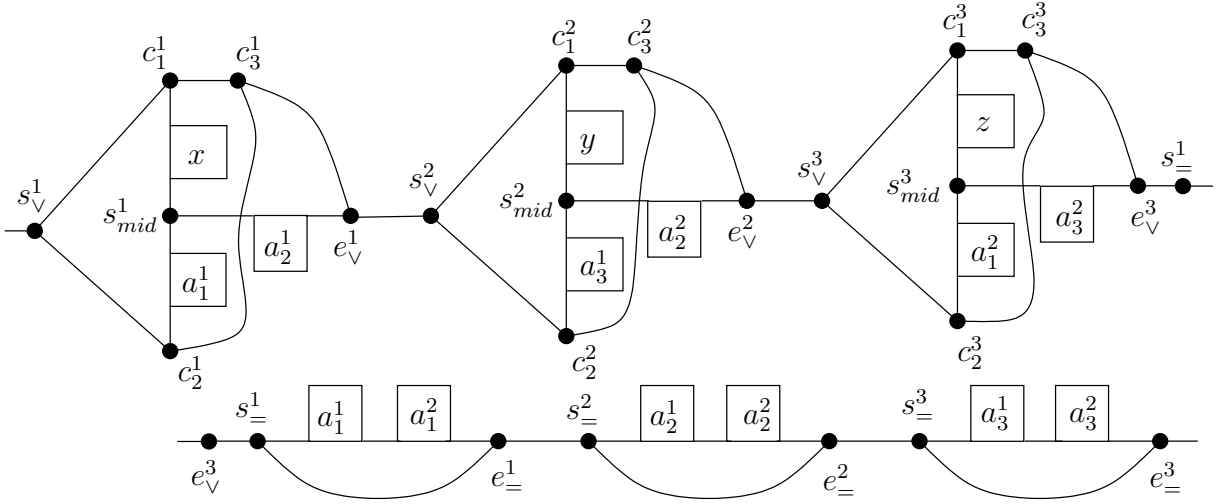}
\end{center}
\vspace{-0.5cm}
\caption{Modular view of the construction simulating  $x \oplus y \oplus z =0 $.}
\label{fig:9}
\end{figure} 

 Finally, we connect the  graphs that we introduced by 
parity gadgets as follows: For each graph $G^i_{=}$, we 
create two parity gadgets and connect
them to the   graph $G^{3j}_\vee$ corresponding the clause, in which the
 variable $a^k_i$ with $k\in \{1,2\}$
appear~(See Figure~\ref{fig:10} for a detailed view).  
The parity gadgets, which are associated to the variables $x$,
$y$ and $z$, are attached to   $G^{3j}_\vee$ with $j\in \{1,2,3 \}$
similarly as in the construction described in Section~\ref{sec:12oldconstr}  
for the graph $G^{3\oplus }_c$. Hence, the parity gadget is also connected to 
the graph  which is associated to the wheel $\cW_\alpha$, where
$ \alpha \in \{x,y,z\} $. 

Given an instance $I_\cH$ of the Hybrid problem, we refer to the corresponding
instance of the (1,2)-TSP in subcubic graphs as $G^{12}_{SC}$.

\subsection{Tour From Assignment}

We are going now to construct a tour from a given assignment and 
prove the following lemma.
\begin{lemma}
\label{lem:12sub:easy}
Let $I_\cH$ be an instance of the Hybrid problem
with $n$ wheels, $60\nu $ equation with two variables, 
$2\nu $ equations with three variables and $\phi$ an
assignment that leaves at most $\delta \nu $ equations unsatisfied.
Then, there is a tour in $G^{12}_{SC}$ with cost at most
  $672 \nu +3(n+1)-1+ \delta \nu $
\end{lemma}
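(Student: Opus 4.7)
The plan is to mirror the proof of Lemma~\ref{lem:gen12:easy}, with the outer loop adapted to the new subcubic three-variable gadgets. The argument splits into a normalisation step, an inner loop identical to the earlier construction, and a new outer loop.

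First, I would invoke part~$(iv)$ of Theorem~\ref{ssphybridsatz} to replace $\phi$ by an assignment that still leaves at most $\delta\nu$ equations of $I_\cH$ unsatisfied but satisfies all $60\nu$ two-variable equations, so that every remaining unsatisfied equation is among the $2\nu$ three-variable equations. In particular $\phi$ is consistent on every wheel $\cW_l$: the variables of $Var(l)$ share a common value $b_l:=\phi(x^l_1)$.

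Second, I would reuse the inner loop of Lemma~\ref{lem:gen12:easy} verbatim: start at $b^1_1$, traverse $b^1_1-b^2_1-b^3_1$, apply the $b_l$-traversal on every parity gadget of $\cW_l$ for $l=1,\dots,n$, move between consecutive wheels along the cycle-border paths at each $b^i_{l+1}$, and finally enter the first subcubic three-variable gadget through $b^1_{n+1}-b^2_{n+1}-s^1_\vee$. Since the wheel portion of $G^{12}_{SC}$ is identical to that of $G^{12}_\cH$, this stage contributes the same cost $8\cdot 60\nu+3(n+1)-1=480\nu+3(n+1)-1$ as in the earlier proof, independently of the outer-loop design.

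Third, the outer loop must sweep through the $2\nu$ subcubic three-variable gadgets. Fix one such gadget, associated with an equation $x\oplus y\oplus z=0$. It consists of three clause graphs $G^{31}_\vee,G^{32}_\vee,G^{33}_\vee$ and three equality graphs $G^1_=,G^2_=,G^3_=$ linked in series by the connector edges $\{e^i_\vee,s^{i+1}_\vee\}$, $\{e^3_\vee,s^1_=\}$, $\{e^i_=,s^{i+1}_=\}$, with auxiliary parity gadgets for the $a^k_i$ attached inside the appropriate clause graphs and with the variable parity gadgets for $x,y,z$ already committed to their $b_x,b_y,b_z$-traversals by the inner loop. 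If $b_x\oplus b_y\oplus b_z=0$, the reduction from the MAX-E3LIN2 problem to MAX-2in3SAT recalled above produces auxiliary values $a^k_i\in\{0,1\}$ with $a^1_i=a^2_i$ such that each clause is exactly 2-in-3 satisfied; the lemma on $G^3_\vee$ then supplies a Hamiltonian path through each clause graph using precisely the two true-literal module edges, while each equality graph $G^i_=$ is traversed by the three-edge chain through its two auxiliary parity gadgets in their $a^k_i$-traversals. Concatenating along the connectors yields a Hamiltonian path of the entire subcubic three-variable gadget from $s^1_\vee$ to $e^3_=$ with no internal endpoints; a direct vertex count (three clause cores, three equality cores, six auxiliary parity gadgets and three variable parity gadgets, totalling $18+6+48+24=96$ vertices) shows that this contributes cost exactly $96$. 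If instead $b_x\oplus b_y\oplus b_z=1$, no consistent 2-in-3 completion of the auxiliaries exists, at least one clause graph forces one additional pair of endpoints by the lemma on $G^3_\vee$, and the gadget contributes cost at most $97$.

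Summing over all $2\nu$ three-variable gadgets, at most $\delta\nu$ of them unsatisfied, the outer loop contributes at most $96\cdot 2\nu+\delta\nu=192\nu+\delta\nu$, and adding the inner-loop bound yields the claimed $672\nu+3(n+1)-1+\delta\nu$. The main obstacle will be the Case~1 verification: for every triple $(b_x,b_y,b_z)$ with $b_x\oplus b_y\oplus b_z=0$ and every consistent 2-in-3 completion $a^k_i$, one must check that the Hamiltonian paths supplied by the lemmas on $G^3_\vee$ and on $G_=$ are simultaneously compatible with the already-fixed $0/1$-traversals of all attached parity gadgets at the shared contact vertices of Figure~\ref{fig:9}. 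This reduces to a finite but delicate case-check on the subcubic structure.
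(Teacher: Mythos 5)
Your proposal is correct and follows essentially the same route as the paper: reuse the inner loop from Lemma~\ref{lem:gen12:easy} (after normalising $\phi$ via Theorem~\ref{ssphybridsatz} so that only three-variable equations can be unsatisfied), then traverse each subcubic three-variable gadget by a Hamiltonian path of cost $3(6+3\cdot 8+2)=96$ when the equation is satisfied and pay one extra unit (two additional endpoints) otherwise, giving $8\cdot 60\nu+96\cdot 2\nu+3(n+1)-1+\delta\nu=672\nu+3(n+1)-1+\delta\nu$. Your vertex count and endpoint accounting match the paper's exactly; the compatibility check you flag at the end is likewise left implicit in the paper.
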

\begin{proof}
Given the assignment $\phi$, we define the inner loop of the tour in $G^{12}_{SC}$
in the same way as in Lemma~\ref{lem:gen12:easy}. This means that some of the 
parity gadgets which are connected to gadgets simulating
equations with three variables may have been traversed in 
the inner loop of the tour.
In the outer loop of the tour, if the assignment satisfies the underlying
equation $x\oplus y \oplus z =0 $, then there is a Hamiltonian 
path traversing all graphs corresponding to $(x \vee a^{1}_1 \vee a^{1}_2)$,
$(y \vee a^{2}_2 \vee a^{1}_3)$, $(z \vee a^{2}_1 \vee a^{2}_3)$,
$a^1_1 \oplus a^2_1 =0$,
$a^1_2 \oplus a^2_2 =0$ and   $a^1_3 \oplus a^2_3 =0$.
For each satisfied equation with three variables, we  associate
the  cost $ 3(6 + 3\cdot 8+ 2   )$. If the underlying equation is not
satisfied, we have to introduce two endpoints. Thus, we associate the
cost $3(6 + 3\cdot 8+ 2   ) +1$. Summing up,  
 we obtain a tour in $G^{12}_{SC}$ with cost at most
$
8 \cdot 60\nu  + 3\cdot (6 + 3\cdot 8+ 2   )\cdot 2\nu + 3(n+1)-1  +\delta \nu = 
672 \nu +3(n+1)-1+ \delta \nu .
$ 
\end{proof}

\subsection{Assignment  From Tour}

Given a tour in $G^{12}_{SC}$, we are going to construct 
an assignment to the variables of the corresponding instance
$I_\cH$ of the Hybrid problem and prove the following lemma.

\begin{lemma}
\label{lem:12sub:hard}
Let $I_\cH$ be an instance of the Hybrid problem
with $n$ wheels, $60\nu $ equations with two variables, 
$2\nu $ equations with three variables and $\pi$ a
tour in $G^{12}_{SC}$ with cost  $ 672  \nu +3(n+1)-1 + \delta \nu $.
Then, it is possible to construct efficiently 
an assignment that leaves at most $\delta \nu $ equations in 
$I_\cH$ unsatisfied.
\end{lemma}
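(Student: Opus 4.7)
The plan is to mirror the structure of the analogous argument for the degree-5 construction (Lemma~\ref{lem:12gen:hard}), but with the cost bookkeeping adjusted to reflect the new outer-loop gadgets. First, I would iteratively apply Lemma~\ref{lem:12gen:cons} to every parity gadget appearing in $G^{12}_{SC}$, obtaining from $\pi$ a \emph{consistent} tour $\pi^\ast$ whose cost is still at most $672\nu+3(n+1)-1+\delta\nu$. In a consistent tour every parity gadget is entered and left by one of its two canonical $0/1$-traversals, so that each parity gadget has a well-defined Boolean ``value''. Next, I would invoke the wheel-amplifier property exactly as in the proof of Theorem~\ref{ssphybridsatz}: given the values assigned by $\pi^\ast$ to the parity gadgets inside each wheel $\cW_l$, convert them into a \emph{wheel-consistent} assignment by flipping the minority side, and show that this modification can only decrease the number of endpoints and hence the tour cost. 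This defines the candidate assignment $\phi$, where $\phi(x_l)$ is the common $0/1$ value of the parity gadgets associated with $\cW_l$.

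The core of the proof is the cost accounting. I would partition the cost of $\pi^\ast$ into three parts: (a) the inner loop traversing wheels, which costs exactly $8\cdot 60\nu$ for the matching/cycle parity gadgets plus $3(n+1)-1$ for the $b^i_l$-backbone whenever each wheel is handled consistently; (b) the three-variable-equation bundles in the outer loop; and (c) all remaining extra endpoints. For part (b), I would analyze, for a fixed equation $x\oplus y\oplus z=0$, the combined subgraph made of $G^{31}_\vee,G^{32}_\vee,G^{33}_\vee,G^1_=,G^2_=,G^3_=$ together with the attached parity gadgets of $x,y,z$ and auxiliary $a^k_i$ variables, and show two things: (i) if $\phi$ satisfies the equation, this bundle admits a Hamiltonian traversal of cost exactly $3(6+3\cdot 8+2)=96$, and (ii) otherwise any traversal costs at least $97$, i.e.\ forces at least one extra endpoint pair. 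The first direction follows from the 2-in-3 SAT reduction combined with the Hamiltonian path characterizations of $G^3_\vee$ and $G_=$; the second direction uses the same characterizations contrapositively together with Lemma~\ref{prop:ek06gadgettsp}-style parity bookkeeping on the number of parity gadgets entered from the outer loop.

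Summing (a) and the satisfied contribution of (b) yields a baseline of $8\cdot 60\nu+96\cdot 2\nu+3(n+1)-1=672\nu+3(n+1)-1$. Consequently, the excess cost $\delta\nu$ of $\pi^\ast$ is an upper bound on the total number of extra endpoint pairs, and by the above dichotomy each unsatisfied equation in $I_\cH$ (whether a matching/cycle equation traversed inconsistently or a three-variable equation whose bundle is not Hamiltonian) accounts for at least one such extra endpoint pair. Hence $\phi$ leaves at most $\delta\nu$ equations of $I_\cH$ unsatisfied, which is exactly what the lemma asserts.

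The main obstacle will be rigorously establishing claim (ii) above: one must exclude the possibility that a clever tour ``saves'' endpoints in an unsatisfied three-variable bundle by cutting across several $G^3_\vee$ or $G_=$ subgadgets in an unintended way, possibly re-entering them in the wrong order or using the direct $\{s_=,e_=\}$ shortcut together with non-canonical parity-gadget traversals. Handling this cleanly requires a careful case analysis of how an optimal tour can interact with the boundary vertices $s^i_\vee,e^i_\vee,s^i_=,e^i_=$, and showing that each deviation from the satisfied-case Hamiltonian structure forces at least one additional weight-$2$ edge whose ``cost'' is not already charged to a violated matching equation in the inner loop. Once this combinatorial accounting is in place, the bound on unsatisfied equations follows immediately.
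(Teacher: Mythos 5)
Your proposal is correct in substance but takes a more direct route than the paper. After the common first step (applying Lemma~\ref{lem:12gen:cons} to make the tour consistent), the paper does not analyze the outer-loop bundles of $G^{12}_{SC}$ in place; instead it translates the consistent tour into a tour of the original degree-5 instance $G^{12}_{\cH}$ by keeping the inner loop and swapping each subcubic bundle ($3$ copies of $G^3_\vee$ plus $3$ copies of $G_=$, costing $3(6+3\cdot 8+2)$ per equation) for the original gadget $G^{3\oplus}_c$ (costing $3\cdot 8+3$), which turns the cost $672\nu+3(n+1)-1+\delta\nu$ into $534\nu+3(n+1)-1+\delta\nu$, and then invokes Lemma~\ref{lem:12gen:hard} as a black box to extract the assignment. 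You instead extract the assignment directly from the consistent tour in $G^{12}_{SC}$ and redo the per-equation endpoint accounting from scratch. The two routes need the same combinatorial fact --- the one you label claim~(ii), that a bundle whose underlying equation is violated by the extracted assignment cannot be traversed without an extra endpoint pair --- since the paper's ``enables us to construct a tour in $G^{12}_{\cH}$'' step silently relies on exactly this correspondence between extra endpoints in the new bundles and extra endpoints in $G^{3\oplus}_c$. What the paper's route buys is modularity: the delicate tour-to-assignment extraction (including the wheel-amplifier/consistency step you describe) is done once, in Lemma~\ref{lem:12gen:hard}, and never repeated; what your route buys is an explicit, self-contained charging argument, at the price of having to carry out the case analysis on $s^i_\vee, e^i_\vee, s^i_=, e^i_=$ that you correctly flag as the main obstacle. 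Either way the arithmetic baseline $8\cdot 60\nu + 96\cdot 2\nu + 3(n+1)-1 = 672\nu+3(n+1)-1$ and the conclusion match the paper.
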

\begin{proof}

In the first step, we convert the underlying tour in $G^{12}_{SC}$ into a consistent 
one without increasing its cost. This is done by applying
Lemma~\ref{lem:12gen:cons}  to each parity gadget in $G^{12}_{SC}$.
In the second step, we use the same $0/1$-traversals of the parity gadgets 
 in the inner loop of the tour which enables us to 
construct a tour 
 in the corresponding instance  $G^{12}_{\cH}$ with cost at most 
$$
 672 \nu +3(n +1) -1 + \delta \nu    - 3\cdot (6 + 3\cdot 8+ 2   ) \cdot 2\nu +    
(3\cdot 8 +3) \cdot 2\nu = 534\nu + 3(n +1) -1 + \delta \nu.
 $$
Finally, we apply Lemma~\ref{lem:12gen:hard} and compute efficiently an assignment that 
leaves at most $\delta \nu $ equations in  $I_\cH$ unsatisfied.
\end{proof}

We are ready to give the proof of Theorem~\ref{thm:12subcubic}.

\begin{proof}[Proof of Theorem~\ref{thm:12subcubic}]
Given $I_\cH$ an  instance of the Hybrid problem consisting of $n$ wheels,
 $60 \nu $ equations with two variables
and $2\nu $ equations with three variables,
 we construct in polynomial time 
the associated instance $G^{12}_{SC}$ of the
(1,2)-TSP.

Given an assignment $\phi$ to the variables of $I_\cH$ leaving $\delta \cdot \nu$ equations 
unsatisfied with $\delta \in (0, 1)$, 
then, according to Lemma~\ref{lem:12sub:easy}, it is possible to find a tour 
with  cost at most $672 \nu +3(n+1)-1+ \delta\cdot  \nu$.

On the other hand, if we are given a tour $\sigma$ in $G^{12}_{SC}$ with cost 
$672 \nu +3(n+1)-1+ \delta\cdot  \nu $, due to Lemma~\ref{lem:12sub:hard}, 
we are able to  construct efficiently an assignment to the variables 
of $I_\cH$, which leaves
at most $\delta \nu $ equations in $I_\cH$ unsatisfied. 

Similarly to the 
proof of Theorem~\ref{thm:12gen:ks12}, for a constant $\tau >0$,  we may assume that $(3n + 4)/ \nu \leq \tau$
holds. According to Theorem~\ref{ssphybridsatz}, we know that for all $\eps> 0$, it 
is \clNP-hard to decide 
 whether there is a tour with cost
at most $ 672 \nu+ 3(n + 1)-1+ \eps \cdot \nu
 \leq 672\cdot \nu +\eps'\nu$
or all tours have cost at least $672\cdot \nu + (1 -\eps) \nu+3(n + 1)-1  \geq 673 \cdot 
\nu
-\eps'\cdot \nu  $, for some $\eps'$ that depends only on $\eps$ and $\tau$.
By appropriate choices for $\epsilon$ and  $\tau$, 
the ratio between these two cases can get arbitrarily close to $673/672$. 
\end{proof}

\begin{figure}[h]
\begin{center}
\input{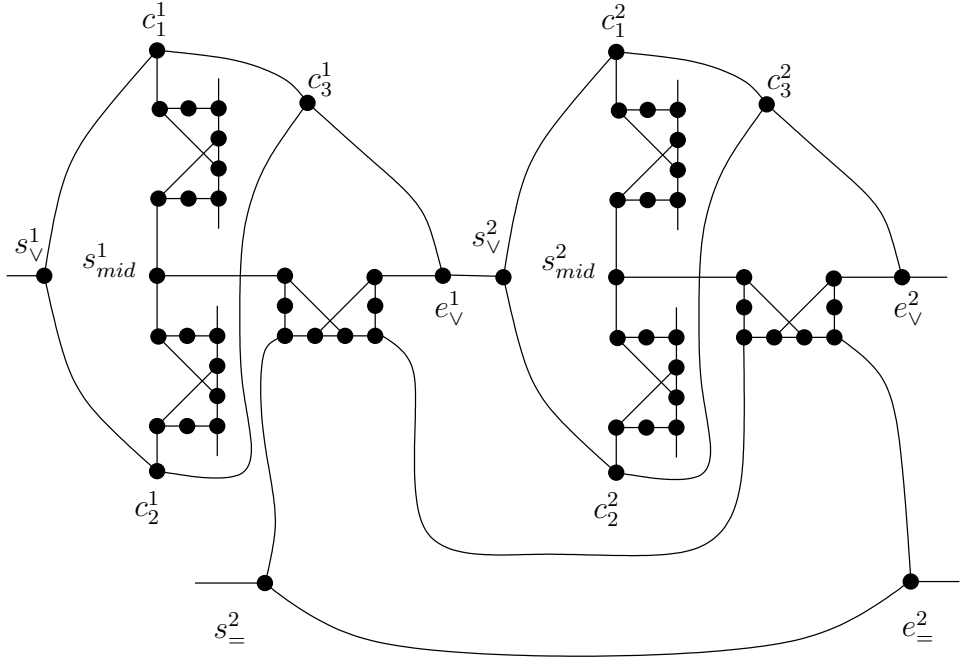}
\end{center}
\vspace{-0.5cm}
\caption{Detailed view of the gadget for  $(x \vee a^{1}_1 \vee a^{1}_2)$,
$(y \vee a^{2}_2 \vee a^{1}_3)$ and $a^1_2 \oplus a^2_2 =0$.}
\label{fig:10}
\end{figure} 

\section{(1,2)-TSP in Cubic Graphs}
\label{sec:12tspcub}
This section is devoted to  
 the proof of Theorem~\ref{thm:12cubic}.

\subsection{The Construction of the Graph $G^{12}_{CU}$} 
Given an instance $I_\cH$ of the Hybrid problem with $n$ wheels, $60 \nu$
equations with two variables and $2\nu $ equations with three variables,
we construct the corresponding graph $G^{12}_{SC}$. In order to 
convert the instance $G^{12}_{SC}$ of the (1,2)-TSP in subcubic graphs
into an instance $G^{12}_{CU}$ of the (1,2)-TSP in cubic graphs, 
we   replace all vertices with degree exactly two by 
a path in which all vertices will have degree exactly three. Let us 
describe this in detail: 
Let $w$ be a vertex with degree two in $G^{12}_{SC}$,  
which is connected to $x$ and
$y$. Replace  $w$ with  the path
$p_w=v^1_w-v^2_w-v^3_w-v^4_w $. In addition, we add edges $\{v^1_w, v^3_w\}$, 
 $\{v^2_w, v^4_w \}$, $\{x, v^1_w\}$ and $\{y , v^4_w\}$. 
By applying this modification to each vertex of degree exactly two,
we create a cubic graph  and refer to it as   $G^{12}_{CU}$.

A modified parity gadget is  displayed in Figure~\ref{fig:2}~$(a)$.
The corresponding traversals are defined in Figure~\ref{fig:2}~$(b)$ and $(c)$.

\begin{figure}[h]
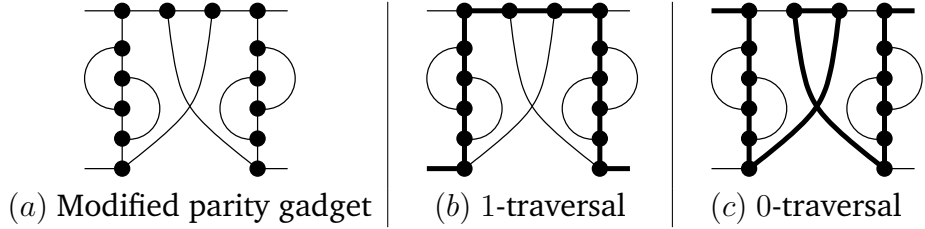

\begin{center}
\begin{tabular}[c]{c|c|c}
\input{paritz2.pdf_t} & ~ \input{paritz20.pdf_t} ~ & 
~ \input{paritz21.pdf_t} \\
$(a)$~Modified parity gadget  & $(b)$~$1$-traversal &  $(c)$~$0$-traversal 
\end{tabular}
\vspace{-0.5cm}
\end{center}
\caption{$0/1$-Traversals of a modified parity gadget. 
The  traversed edges are pictured by thick lines.}
\label{fig:2}
\end{figure}

The following lemma enables us to construct a tour in $G^{12}_{CU}$
given an assignment  $\phi$ to the variables of the corresponding instance $I_\cH$  of the Hybrid problem
with a certain cost that depends on the number on unsatisfied equations 
in $I_\cH$ by $\phi$.

\begin{lemma}
\label{lem:12cub:easy}
Let $I_\cH$ be an instance of the Hybrid problem
with $n$ wheels, $60\nu $ equation with two variables, 
$2\nu $ equations with three variables and $\phi$ an
assignment that leaves  $\delta\cdot  \nu$ equations unsatisfied
for some $\delta \in (0,1)$.
Then, it is possible to construct efficiently  
a tour in $G^{12}_{CU}$ with cost at most
  $1140\nu + 6(n+1)-1  +  \delta\cdot  \nu$
\end{lemma}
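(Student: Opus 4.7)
The plan is to lift the tour constructed in Lemma~\ref{lem:12sub:easy} from $G^{12}_{SC}$ to a tour in $G^{12}_{CU}$ by substituting each visit to a degree-$2$ vertex with a traversal of its $4$-vertex replacement gadget, then accounting for the extra vertices introduced. I will start by invoking Lemma~\ref{lem:12sub:easy} to obtain a tour $\pi_{SC}$ in $G^{12}_{SC}$ of cost at most $672\nu + 3(n+1) - 1 + \delta\nu$, and then transform $\pi_{SC}$ into a tour $\pi_{CU}$ in $G^{12}_{CU}$.

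The key local observation is that for every degree-$2$ vertex $w$ of $G^{12}_{SC}$ with neighbours $x,y$, the replacement gadget $p_w$ on $\{v^1_w, v^2_w, v^3_w, v^4_w\}$ admits a Hamiltonian path $v^1_w - v^2_w - v^3_w - v^4_w$ consisting of three weight-$1$ edges, and the external edges $\{x, v^1_w\}$ and $\{y, v^4_w\}$ are also of weight $1$ in $G^{12}_{CU}$. Hence every visit of $w$ in $\pi_{SC}$ can be replaced in $\pi_{CU}$ by this Hamiltonian path, with endpoints shifted from $w$ to $v^1_w$ or $v^4_w$ as needed. Three cases arise: if both edges $\{x,w\},\{y,w\}$ are used by $\pi_{SC}$, the substitution introduces only weight-$1$ edges; if exactly one is used and $w$ is a single endpoint, the new endpoint sits at the far end of the gadget path; if $w$ is a double endpoint, the path is traversed from $v^1_w$ to $v^4_w$ with a weight-$2$ edge at each end. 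In every case, the substitution adds exactly $3$ to the cost (three new vertices) and introduces no additional weight-$2$ edges.

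Therefore the cost of $\pi_{CU}$ equals the cost of $\pi_{SC}$ plus $3$ times the number of degree-$2$ vertices in $G^{12}_{SC}$. A direct inspection of the construction from Section~\ref{sec:12tspsubcub} should yield exactly $156\nu + (n+1)$ such vertices: the border paths $b^1_l - b^2_l - b^3_l$ contribute one degree-$2$ vertex per wheel plus one at the tail, giving $n+1$ in total, while the remaining $156\nu$ arise from the skeletons of the $G^3_\vee$ and $G_=$ modules (Figures~\ref{fig:9} and~\ref{fig:10}) together with the interfaces between the parity gadgets and these skeletons. Plugging this count in yields
\[
672\nu + 3(n+1) - 1 + \delta\nu + 3\bigl(156\nu + (n+1)\bigr) = 1140\nu + 6(n+1) - 1 + \delta\nu,
\]
which is the desired bound.

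The main obstacle is the bookkeeping step: verifying that $G^{12}_{SC}$ contains exactly $156\nu + (n+1)$ vertices of degree $2$ requires a case analysis across all module types (parity gadgets, $G^3_\vee$, $G_=$, the border paths, and their pairwise attachments). Once this count is confirmed against the explicit drawings, the lifting argument and the final arithmetic are both routine, and no structural property of $\pi_{SC}$ beyond its cost is used.
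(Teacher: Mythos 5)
Your proposal is correct and follows essentially the same route as the paper: lift the tour of Lemma~\ref{lem:12sub:easy} from $G^{12}_{SC}$ to $G^{12}_{CU}$ by replacing each visit to a degree-$2$ vertex with a traversal of its four-vertex replacement path, adding $3$ to the cost per such vertex. Your count of $156\nu+(n+1)$ degree-$2$ vertices agrees with the paper's per-gadget accounting of $6\cdot 60\nu + 9\cdot 6\cdot 2\nu + 3(n+1) = 3\bigl(156\nu+(n+1)\bigr)$ in added cost, so the arithmetic matches.
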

\begin{proof}
Basically, we  use the same tour as constructed in Lemma~\ref{lem:12sub:easy} for 
the graph $G^{12}_{SC}$ with 
the difference that instead of traversing a vertex $w$ of degree exactly
two in $G^{12}_{SC}$, we have to use the path $v^1_w-v^2_w-v^3_w-v^4_w$ 
consisting of $3$ more vertices. Thus, if we have given a tour $\sigma$ in
$G^{12}_{SC}$, that was constructed according to Lemma~\ref{lem:12sub:easy}, we have to add
 $ 6\cdot  60\nu $ (for each equation with two variables),  $9\cdot 6 \cdot 2\nu $
(for each equation with three variables), and 
$ 3(n+1) $ (for each wheel) to the cost of $\sigma$ and obtain a tour in $G^{12}_{CU}$
 with cost at most 
$$
672 \nu + 3(n +1) -1 + \delta\cdot  \nu +  (6\cdot  60\nu ) +  9\cdot 6 \cdot 2\nu 
+ 3(n+1)= 1140\nu + 6(n+1)-1 +\delta\cdot  \nu
  $$
	and the proof of Lemma~\ref{lem:12cub:easy} follows.
\end{proof}

\subsection{Tour to Assignment}
We are going to prove the other direction of the reduction and
give the  proof of the following lemma.
\begin{lemma}
\label{lem:12cubhard}
Let $I_\cH$ be an instance of the Hybrid problem
with $n$ wheels, $60\nu $ equation with two variables, 
$2\nu $ equations with three variables and $\pi$ a
tour  in $G^{12}_{CU}$ with cost  $ 1140 \nu +  6(n+1) -1+ \delta \cdot \nu $.
Then, it is possible to construct efficiently 
an assignment that leaves at most $\delta \cdot \nu$ equations in 
$I_\cH$ unsatisfied.
\end{lemma}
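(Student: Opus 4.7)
The plan is to mirror the strategy used in Lemma~\ref{lem:12sub:hard}: first normalize the tour in $G^{12}_{CU}$ so that each parity gadget is traversed canonically, then \emph{contract} each of the replacement paths $p_w = v^1_w\!-\!v^2_w\!-\!v^3_w\!-\!v^4_w$ back to the original degree-two vertex $w$, obtaining a tour in $G^{12}_{SC}$, and finally invoke Lemma~\ref{lem:12sub:hard} to extract an assignment.

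First, I would apply Lemma~\ref{lem:12gen:cons} to every (modified) parity gadget in $G^{12}_{CU}$. Since the only change in the parity gadget is the internal degree-raising substitution illustrated in Figure~\ref{fig:2}, the consistency argument of~\cite{KS13} carries over verbatim: the $0/1$-traversals remain the unique cost-optimal ways to enter and exit the gadget, and a non-conforming visit can be rerouted without increasing the number of endpoints. This step produces, at no extra cost, a tour $\pi'$ in which every parity gadget is entered and left via a $0/1$-traversal.

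Next, I would process each replacement path $p_w$ associated with a former degree-two vertex $w$ of $G^{12}_{SC}$. By construction, $p_w$ has exactly two neighbours outside the gadget, namely the vertices $x$ and $y$ that were originally adjacent to $w$. A straightforward case analysis on how $\pi'$ interacts with $p_w$ (either traversing it entirely in one sweep from $x$ to $y$, or entering and leaving $p_w$ through the same side, possibly introducing endpoints) shows that one can always locally re-route $\pi'$ so that the contribution of $p_w$ is exactly $3$ edges beyond what a traversal of the single vertex $w$ would cost in $G^{12}_{SC}$, without introducing new endpoints. Contracting every such $p_w$ back to $w$ then yields a tour $\sigma$ in $G^{12}_{SC}$ whose cost decreases by exactly the additive overhead introduced in Lemma~\ref{lem:12cub:easy}, namely $6\cdot 60\nu + 9\cdot 6\cdot 2\nu + 3(n+1) = 468\nu + 3(n+1)$. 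Hence $\sigma$ has cost at most
\[
1140\nu + 6(n+1) - 1 + \delta\nu - \bigl(468\nu + 3(n+1)\bigr) \;=\; 672\nu + 3(n+1) - 1 + \delta\nu.
\]

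Finally, Lemma~\ref{lem:12sub:hard} applied to $\sigma$ produces, in polynomial time, an assignment to the variables of $I_\cH$ leaving at most $\delta\nu$ equations unsatisfied, which is exactly the desired conclusion. The main technical obstacle is the second step: one must verify that no matter how the given tour behaves inside a replacement gadget $p_w$ (including pathological cases where multiple endpoints accumulate there or where the tour enters from only one side), there is a local transformation that collapses $p_w$ back to $w$ while losing exactly the $3$ units of cost claimed. This is essentially a finite case check on a four-vertex gadget with a fixed set of external connections, analogous to the consistency arguments for parity gadgets, and is where the accounting that makes the $1140\nu$ term match $672\nu + 468\nu$ is pinned down.
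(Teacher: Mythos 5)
Your proposal follows essentially the same route as the paper's proof: locally re-route the tour on each replacement path $p_w$ so that it is traversed as a single weight-1 sweep, contract each $p_w$ back to $w$ to obtain a tour in $G^{12}_{SC}$ of cost $672\nu+3(n+1)-1+\delta\nu$, and then invoke Lemma~\ref{lem:12sub:hard}; your cost accounting ($468\nu+3(n+1)$ saved) matches the paper's. The only differences are cosmetic: you front-load the parity-gadget normalization (the paper defers it to Lemma~\ref{lem:12sub:hard}), and you leave the four-vertex case check as a sketch where the paper spells out the explicit endpoint-removing rerouting, but the argument is the same.
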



%
%

\begin{proof}
Let $\pi$ be a tour in $G^{12}_{CU}$ with cost $ 1140 \nu +  6(n+1)-1 + \delta \cdot \nu $.
We are going to show that we can convert efficiently 
$\pi $  into a tour $\pi'$ in $G^{12}_{SC}$ with cost
$ 672 \nu +3(n+1)-1+ \delta\cdot  \nu $. For this, we consider 
  the path $x- v^1_c-v^2_c-v^3_c-v^4_c - y $ in 
$G^{12}_{CU}$, where $p_c= v^1_c-v^2_c-v^3_c-v^4_c $ corresponds to the  vertex $c$ of 
degree exactly two in the instance $G^{12}_{SC}$. 
As we want to contract the path $p_c$ into one vertex, we will ensure that 
the (1,2)-tour   is using either the path $ v^1_c-v^2_c-v^3_c-v^4_c $ or 
$ v^1_c - v^3_c - v^2_c - v^4_c $. Let us assume that either $v^2_c$ or $v^3_c$
is an endpoint, say $v^2_c$. Clearly, it implies that there is another endpoint in 
$\{ v^1_c,v^3_c,v^4_c\}$ or $v^2_c$ is a double endpoint.
We delete all edges of weight 1 that the tour is using and are
 incident on $v^2_c$ and $v^3_c$. Then, we add $\{v^1_c,v^2_c\}$, $\{v^2_c,v^3_c\}$
and   $\{v^3_c, v^4_c\}$ to connect $v^4_c$ and $v^1_c$ by edges of weight 1.
Note that this transformation decreased the total number of endpoints and the cost of 
the (1,2)-tour.  By applying this
transformation successfully to each such path $p_c$, we obtain a tour which 
is using the complete   
path that corresponds to a vertex of degree 2 in the instance $G^{12}_{SC}$
without increasing the cost of the tour. By contracting 
 each path $p_c$ into  the vertex $c$, it yields  a (1,2)-tour
in $G^{12}_{SC}$ with cost at most $672 \nu +3(n+1)+1+ \delta\cdot  \nu$. 
Finally, we apply lemma~\ref{lem:12sub:hard} and obtain an assignment 
that leaves at most $\delta \cdot \nu $ equations in $I_\cH$ unsatisfied.
\end{proof}
Analogously to the proof of Theorem~\ref{thm:12subcubic}, we
combine Lemma~\ref{lem:12cub:easy} with Lemma~\ref{lem:12cubhard} and obtain  
Theorem~\ref{thm:12cubic}. 

\section{Graphic TSP in Subcubic and Cubic Graphs}
\label{sec:graphtsp}
In this section, we are going to give the proof of Theorem~\ref{thm:gr:cubic}
and Theorem~\ref{thm:gr:subcubic}.

\subsection{The Construction}
Let $I_\cH$ be an instance of the Hybrid problem. We first construct 
the corresponding instances  $G^{12}_{CU}$ and $G^{12}_{SC}$
 of the (1,2)-TSP in cubic and subcubic graphs, respectively.
 Each gadget $G_=$ in  $G^{12}_{SC}$
is replaced by the graph $G^{gr}_=$ displayed in Figure~\ref{fig:5}.
We refer to this construction as the graph $G^{gr}_{SC}$.
In order to obtain  an instance of the  Graphic TSP on 
cubic graphs, we use the modified parity gadgets in $G^{gr}_=$
and denote this instance as  $G^{gr}_{CU}$.

\begin{figure}[h]
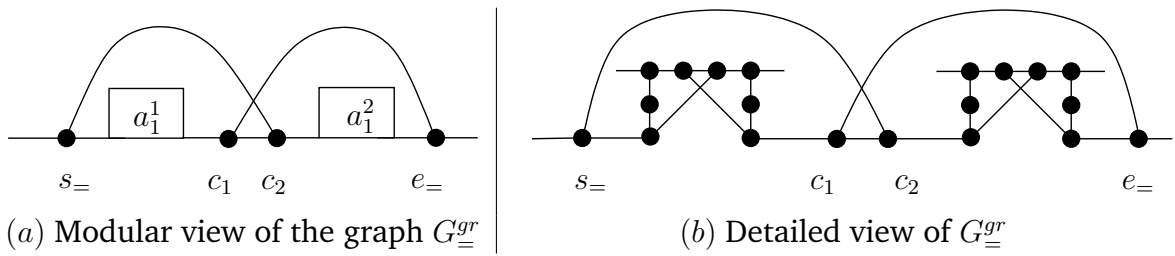

\begin{center}
\begin{tabular}[c]{c|c}
\input{equalgr.pdf_t} & ~ \input{equalgrdet.pdf_t}  \\
$(a)$~Modular view of the graph $G^{gr}_{=}$  & $(b)$~Detailed view of $G^{gr}_{=}$ 
\end{tabular}
\end{center}
\vspace{-0.5cm}
\caption{Graph $G^{gr}_=$ corresponding to $a^1_1 \oplus a^2_1 =0 $.}
\label{fig:5}
\end{figure} 

Let us prove one direction of the reductions.
\begin{lemma}
\label{lem:grsub:easy}
Let $I_\cH$ be an instance of the Hybrid problem
with $n$ wheels, $60\nu $ equation with two variables, 
$2\nu $ equations with three variables and $\phi$ an
assignment that leaves at most $\delta \nu $ equations unsatisfied.
Then, there is a tour in $G^{gr}_{SC}$ and in $G^{gr}_{CU}$
 with cost at most $684  \nu +3(n+1)-1 + \delta \nu $ and $1152\nu+6(n+1)-1+ \delta 
\nu $, respectively.
\end{lemma}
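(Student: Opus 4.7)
The plan is to lift the tour constructions of Lemma~\ref{lem:12sub:easy} and Lemma~\ref{lem:12cub:easy} into the graphic setting, modifying only the local traversal inside each copy of $G^{gr}_=$ that has replaced a $G_=$. I would begin by taking the $(1,2)$-TSP tour $\sigma$ in $G^{12}_{SC}$ (respectively $G^{12}_{CU}$) produced from $\phi$ by the earlier lemma, of cost $672\nu+3(n+1)-1+\delta\nu$ (resp.\ $1140\nu+6(n+1)-1+\delta\nu$). The inner loop through the wheels, the connector paths $b^1_l-b^2_l-b^3_l$, the parity gadgets themselves, and the clause-graphs $G^{3i}_\vee$ are all carried over unchanged into $G^{gr}_{SC}$ / $G^{gr}_{CU}$, so their edge count is preserved.

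The only structural change is inside each of the three $G^{gr}_=$ copies per equation with three variables: the shortcut edge $\{s_=,e_=\}$ available in $G_=$ is absent in $G^{gr}_=$, and two new vertices $c_1,c_2$ must be covered. I would locally substitute each traversal of $G_=$ in $\sigma$ by a graphic traversal of $G^{gr}_=$. When $\sigma$ used the shortcut $\{s_=,e_=\}$ (so that the attached parity gadgets were entered from the opposite side in their $\phi$-consistent $0/1$-traversal), the new tour instead routes $s_=-a^1_1-c_1-c_2-a^2_1-e_=$ through the auxiliary vertices while still using the same $0/1$-traversals of the parity gadgets on their other ports; when $\sigma$ traversed the parity gadgets through $G_=$ directly, the new walk reuses one edge incident to $s_=$ or $e_=$ in order to pick up $c_1,c_2$. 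A short case analysis shows that, in either case, exactly two additional edges of $G^{gr}_=$ are used compared with the corresponding segment of $\sigma$, and the resulting walk is a valid Eulerian spanning multi-subgraph of $G^{gr}_=$.

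Summing up, there are $3\cdot 2\nu=6\nu$ copies of $G^{gr}_=$, each contributing an extra cost of $2$, so the overall overhead is $12\nu$. This yields graphic tours of cost at most
\[
672\nu+3(n+1)-1+\delta\nu+12\nu \;=\; 684\nu+3(n+1)-1+\delta\nu
\]
in $G^{gr}_{SC}$ and
\[
1140\nu+6(n+1)-1+\delta\nu+12\nu \;=\; 1152\nu+6(n+1)-1+\delta\nu
\]
in $G^{gr}_{CU}$, as required. The cubic case uses the modified parity gadgets described in Figure~\ref{fig:2} and the degree-$2$ elimination path $v^1_w-v^2_w-v^3_w-v^4_w$ of Section~\ref{sec:12tspcub} without further change, since these modifications are orthogonal to the replacement $G_=\mapsto G^{gr}_=$.

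The main obstacle will be the case analysis inside $G^{gr}_=$: one has to check that, for every combination of $0/1$-traversal of the two internal parity gadgets dictated by $\phi$ and every way in which the surrounding tour enters and leaves $G^{gr}_=$, there is a graphic traversal of $G^{gr}_=$ that (i)~covers $s_=,c_1,c_2,e_=$ together with both parity gadgets, (ii)~respects the fixed $0/1$-traversal on each parity gadget, and (iii)~uses precisely two edges more than the corresponding traversal of $G_=$ in the $(1,2)$-TSP tour. Once this local bookkeeping is verified, the lemma follows by additivity of the costs across the $6\nu$ gadgets.
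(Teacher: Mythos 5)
Your overall plan---reuse the tours of Lemma~\ref{lem:12sub:easy} and Lemma~\ref{lem:12cub:easy} and charge $2$ extra per copy of $G^{gr}_=$, for a total overhead of $3\cdot 2\cdot 2\nu=12\nu$---is exactly the accounting the paper uses, and the arithmetic $672+12=684$, $1140+12=1152$ is right. However, there is a genuine gap in how you treat the $\delta\nu$ term. In the $(1,2)$-tour of Lemma~\ref{lem:12sub:easy}, each unsatisfied three-variable equation is paid for by introducing two endpoints, i.e.\ by one weight-$2$ edge somewhere in the gadget for that equation. In the graphic metric such a ``jump'' is a shortest path whose length need not be $2$, so you cannot simply carry the $+\delta\nu$ over and claim that ``in either case exactly two additional edges are used'': for an unsatisfied equation your local substitution inside $G^{gr}_=$ must also absorb the unit penalty, and you neither locate the weight-$2$ edge nor show it can be realized at graphic cost $1$ above the satisfied case. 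The paper flags precisely this issue (``we cannot use the same shortcuts as in the $(1,2)$-TSP, since in some cases the weight of an edge can be greater than $2$'') and resolves it with a specific trick you are missing: when $x\oplus y\oplus z=0$ is violated, one can choose the auxiliary values $a^j_i$ so that all three clauses and two of the three equality constraints are satisfied, and exactly one gadget $G^{gr}_=$ carries the violation $a^1_i+a^2_i=1$; in that one gadget the tour traverses the edge $\{c_1,c_2\}$ twice (e.g.\ $s_=-c_1-c_2-c_1-e_=$), which is legal for an Eulerian spanning multigraph and costs exactly $1$ more than the satisfied traversal $s_=-c_1-c_2-e_=$.

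So the missing ingredient is not merely a deferred ``short case analysis'' but the choice of where to localize the failure (always inside one equality gadget, never as a free jump elsewhere in the clause gadgets) together with the doubled-edge traversal of $\{c_1,c_2\}$ that makes the penalty exactly $+1$ in the graphic setting. Once you add this, your case analysis collapses to the two situations the paper lists ($a^1_i+a^2_i\in\{0,2\}$ versus $a^1_i+a^2_i=1$), and the rest of your argument, including the remark that the cubic modifications are orthogonal to the replacement $G_=\mapsto G^{gr}_=$, goes through.
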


\begin{proof}
Let us start with the description of the tour in $G^{gr}_{SC}$.
As for the inner loop, 
we use the same tour as in Lemma~\ref{lem:12sub:easy}.
Note  that we traversed only edges with weight 1 in the inner 
loop of the tour in $G^{12}_{SC}$.
In the outer loop, we cannot use the same shortcuts as  in the
(1,2)-TSP, since  in some cases the weight of  an edge can be 
greater than 2. To 
ensure that the cost traversing a gadget corresponding
to an equation with three variables increases only by 
one if the equation is unsatisfied by the assignment,
we will use the following trick:
Consider an equation of the form $x\oplus y \oplus z =0$ 
that is simulated by 
$(x \vee a^{1}_1 \vee a^{1}_2)$,
$(y \vee a^{2}_2 \vee a^{1}_3)$, $(z \vee a^{2}_1 \vee a^{2}_3)$,
$a^1_1 \oplus a^2_1 =0$,
$a^1_2 \oplus a^2_2 =0$ and   $a^1_3 \oplus a^2_3 =0$. If we have an assignment
that satisfies $x\oplus y \oplus z =0$, then there is also an assignment that 
satisfies all $6$ associated predicates. Furthermore, we see that in the other case,
we can find an assignment that satisfies all predicates except exactly one equation  
 with two variables. 

In particular, it implies 
for a tour traversing the gadget $G^{gr}_=$ simulating  $a^1_1 \oplus a^1_2 =0$
 that if $(a^1_1 + a^1_2 =0)$ and  $(a^1_1 + a^1_2 =2)$ holds, 
we use $s_= - c_2 - c_1 - e_=   $ 
and $s_= - c_2 - c_1 - e_=   $, respectively. On the other hand, 
assuming $(a^1_1 + a^1_2 =1)$, we traverse 
either $s_= -c_1 -  c_2 - c_1 - e_=   $ or $s_= -c_2 -  c_1 - c_2 - e_=   $.
Thus, we use  the edge $\{c_1, c_2\}$ twice increasing the cost only by 1.  

Summarizing, given an assignment leaving $\delta \nu  $ equations unsatisfied, we find
a tour in $G^{12}_{SC}$ with cost at most  $672  \nu +3(n+1)-1 + \delta \nu$ and a tour 
in  $G^{gr}_{SC}$ with cost at most $684  \nu +3(n+1)-1 + \delta \nu $, 
since we have to take 
into account the small detour and 
 add $3\cdot 2 \cdot 2\nu  $ to the cost.

Under the same conditions, 
we find a tour in $G^{12}_{CU}$ with cost at most $1140  \nu +6(n+1)-1 + \delta \nu $ and a tour 
in  $G^{gr}_{CU}$ with cost at most $1152  \nu +6(n+1)-1 + \delta \nu $.
\end{proof}

\subsection{Tour to Assignment}
We now give the other direction of the reductions and prove the following lemma.
\begin{lemma}
\label{lem:grsub:hard}
Let $I_\cH$ be an instance of the Hybrid problem
with $n$ wheels, $60\nu $ equation with two variables, 
$2\nu $ equations with three variables,  $\pi$ a
tour in $G^{gr}_{SC}$ with cost  $ 684  \nu +3(n+1)-1 + \delta \nu $
and $\sigma$ a
tour in $G^{gr}_{CU}$ with cost  $ 1152  \nu +6(n+1)-1 + \delta \nu $.
By using either $\pi$ or $\sigma$, 
it is possible to construct efficiently 
an assignment that leaves at most $\delta \nu $ equations in 
$I_\cH$ unsatisfied.
\end{lemma}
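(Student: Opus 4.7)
The plan is to reduce Lemma~\ref{lem:grsub:hard} to Lemma~\ref{lem:12sub:hard} and Lemma~\ref{lem:12cubhard} by contracting every gadget $G^{gr}_=$ back into its $(1,2)$-TSP counterpart $G_=$, so that the given Graphic TSP tour becomes a $(1,2)$-TSP tour of the required cost.

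I would begin with the subcubic case. Since $G^{gr}_{SC}$ differs from $G^{12}_{SC}$ only by the insertion of two extra checker vertices $c_1,c_2$ into every copy of $G_=$ (see Figure~\ref{fig:5}), the tour $\pi$ must visit both $c_1$ and $c_2$. The first step is a local normalization: on each four-vertex block $\{s_=,c_1,c_2,e_=\}$ the restriction of $\pi$ can be rerouted, without increasing cost, into one of two canonical forms, namely the \emph{direct} traversal $s_=-c_\alpha-c_\beta-e_=$ of cost $3$, or the \emph{backtrack} traversal $s_=-c_\alpha-c_\beta-c_\alpha-e_=$ of cost $4$ which uses the edge $\{c_1,c_2\}$ twice. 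The rerouting is a short case analysis on the multiset of edges $\pi$ uses inside the block and on the endpoints it places there, exploiting the fact that $s_=$ and $e_=$ retain the same external neighborhood as in $G^{12}_{SC}$.

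Next, I would contract each normalized block back to $G_=$: a direct traversal collapses onto the single edge $\{s_=,e_=\}$ of weight $1$ in $G^{12}_{SC}$, saving $2$ units, while a backtrack traversal collapses onto $\{s_=,e_=\}$ together with the endpoint pair produced by the mismatched parity gadgets, yielding cost $2$ in $G^{12}_{SC}$ and again saving $2$ units. Crucially, backtracks occur exactly at the blocks whose associated equation $a^1_i\oplus a^2_i=0$ is violated by the parity values carried by $\pi$, so the ``$+1$ per unsatisfied equation'' accounting used in the earlier $(1,2)$-TSP reductions is preserved. Summing over all $6\nu$ copies of $G^{gr}_=$, the savings amount to $12\nu$, so the contracted tour in $G^{12}_{SC}$ has cost at most $672\nu+3(n+1)-1+\delta\nu$, and Lemma~\ref{lem:12sub:hard} yields the desired assignment.

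For the cubic tour $\sigma$ in $G^{gr}_{CU}$ I would first perform the path-cleanup of Lemma~\ref{lem:12cubhard}, which normalizes every subdivided path $v^1_w-v^2_w-v^3_w-v^4_w$ into a contiguous traversal and then contracts it back to its degree-two vertex; this converts $\sigma$ into a tour of $G^{gr}_{SC}$ of cost $684\nu+3(n+1)-1+\delta\nu$, after which the subcubic argument above finishes the proof. The main obstacle will be the local normalization step for $G^{gr}_=$: one has to rule out every way in which a Graphic TSP tour might route extra endpoints through $c_1,c_2$ or repeat edges other than $\{c_1,c_2\}$, and show that any such configuration admits a cost-non-increasing reroute into a direct or backtrack traversal.
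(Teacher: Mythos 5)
Your proposal follows essentially the same route as the paper: interpret the graphic tour as a $(1,2)$-tour, normalize it (the paper does this via consistency of parity gadgets, you via a local case analysis on the $\{s_=,c_1,c_2,e_=\}$ blocks), contract the $6\nu$ copies of $G^{gr}_=$ back to $G_=$ to save exactly $12\nu$, and hand the result to Lemma~\ref{lem:12sub:hard}; the cubic case is likewise handled by first applying the contractions of Lemma~\ref{lem:12cubhard} to land in $G^{gr}_{SC}$. Your write-up is in fact more explicit than the paper's about the direct-versus-backtrack normalization of the $G^{gr}_=$ blocks, which the paper leaves implicit.
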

\begin{proof}
Let us  consider a tour
$\pi$  in $G^{gr}_{SC}$ with cost $ 684  \nu +3(n+1)-1 + \delta \nu $.
We  interpret  $\pi$ as  a (1,2)-tour in $G^{gr}_{SC}$ with cost 
at most $ 684  \nu +3(n+1)-1 + \delta \nu $.   
In the first step, we convert the underlying tour in $G^{gr}_{SC}$ into a consistent 
one without increasing its cost  by applying
Lemma~\ref{lem:12gen:cons}  to each parity gadget in $G^{gr}_{SC}$.
In the second step, we use the same $0/1$-traversals of the parity gadgets 
 in the inner loop  which enables us to 
construct a tour 
 in the corresponding instance  $G^{12}_{SC}$ with cost at most 
$ 672  \nu +3(n+1)-1 + \delta \nu  $.
Finally, we apply Lemma~\ref{lem:12sub:hard} and construct an assignment 
leaving at most $\delta  \nu $ equations in $I_\cH$ unsatisfied.

Analogously, if we have given a tour in $G^{gr}_{SC}$
with cost  $ 1152  \nu +6(n+1)-1 + \delta \nu $, we convert it 
into a (1,2)-tour without increasing its cost. By applying the contractions defined
in Lemma~\ref{lem:12cubhard},
we obtain a (1,2)-tour in  $G^{gr}_{SC}$ with cost 
at most $ 684  \nu +3(n+1)-1 + \delta \nu $, for which we already know 
how to construct an assignment with the desired properties.
\end{proof}

By combining Lemma~\ref{lem:grsub:easy} and Lemma~\ref{lem:grsub:hard}, we obtain immediately Theorem~\ref{thm:gr:cubic}
and Theorem~\ref{thm:gr:subcubic}.

\section{Summary of the Inapproximability Results}
\label{sec:summary}
As mentioned before the explicit inapproximability bound of 535/534 
(\cite{KS12},\cite{KS13}) for the (1,2)-TSP carries through to the Graphic TSP.
We summarize here (Table~\ref{fig:inapproxresults}) the results of the paper.

\begin{table}[h]
\begin{center}
\begin{tabular}{|c|c|c|}
\hline
 & & \\
\raisebox{1.5ex}{
\textbf{Restriction} }& 
\raisebox{1.5ex}{
\textbf{(1,2)-TSP}} & 
   \raisebox{1.5ex}{ \textbf{Graphical TSP } }\\
\hline
\hline
&&\\
\raisebox{1.5ex}{Unrestricted}  &  \raisebox{1.5ex}{535/534 }       &  
\raisebox{1.5ex}{535/534}      \\
\hline
&& \\
 \raisebox{1.5ex}{Subcubic } & \raisebox{1.5ex}{673/672    }      &     
\raisebox{1.5ex}{ 685/684 }\\
 \hline
&& \\
\raisebox{1.5ex}{Cubic}  &  \raisebox{1.5ex}{ 1141/1140 }       & 
\raisebox{1.5ex}{ 1153/1152  }   \\
\hline
\end{tabular}
\end{center}
\caption{
Inapproximability bounds for the instances of (1,2)-TSP and Graphic TSP.
}
\label{fig:inapproxresults}
\end{table}



\section{Conclusions and Further Research } 
\label{sec:conclusion}
We provided new explicit inapproximability bounds for 
cubic and subcubic instances of (1,2)-TSP and Graphic
TSP. The important question is to improve the explicit 
inapproximability bounds on those instances significantly.
A  bottleneck in our constructions, especially for  the cubic case, are the parity
gadgets. Using the modularity of the constructions, any improvement of the costs
of the parity gadgets
will lead to improved inapproximability bounds for the corresponding problems.

The current best upper approximation bound for general cubic instances
of Graphic TSP is  4/3 (cf. \cite{BSSS11a}). 
For the special case of 2-connected cubic graphs, the bound was recently 
improved to (4/3 -  1/61236) \cite{CLS12}.   
How about further improving
 those bounds? How about improving the 
general upper bound of 8/7 \cite{BK06} for cubic instances
of the (1,2)-TSP?


\section*{Acknowledgments}
We thank Leen Stougie and Ola Svensson for a number of interesting
discussions.

\end{document}